\DeclareMathAlphabet{\mathpzc}{OT1}{pzc}{m}{it}
\colorlet{punct}{red!60!black}
\definecolor{background}{HTML}{EEEEEE}
\definecolor{delim}{RGB}{20,105,176}
\colorlet{numb}{magenta!60!black}
\lstdefinelanguage{json}{
    basicstyle=\fontsize{8}{10}\ttfamily,
    showstringspaces=false,
    breaklines=true,
    literate=
     *{0}{{{\color{numb}0}}}{1}
      {1}{{{\color{numb}1}}}{1}
      {2}{{{\color{numb}2}}}{1}
      {3}{{{\color{numb}3}}}{1}
      {4}{{{\color{numb}4}}}{1}
      {5}{{{\color{numb}5}}}{1}
      {6}{{{\color{numb}6}}}{1}
      {7}{{{\color{numb}7}}}{1}
      {8}{{{\color{numb}8}}}{1}
      {9}{{{\color{numb}9}}}{1}
      {:}{{{\color{punct}{:}}}}{1}
      {,}{{{\color{punct}{,}}}}{1}
      {\{}{{{\color{delim}{\{}}}}{1}
      {\}}{{{\color{delim}{\}}}}}{1}
      {[}{{{\color{delim}{[}}}}{1}
      {]}{{{\color{delim}{]}}}}{1},
}
\newcommand{\beq}{\begin{equation}}
\newcommand{\eeq}{\end{equation}}
\newtheorem{theorem}{Theorem}
\theoremstyle{definition}
\theoremstyle{definition}
\theoremstyle{definition}
\begin{document}

\title{Decentralized Common Knowledge Oracles}

\author{Austin K. Williams}\email[]{austin.williams@onewayfunction.com}
\author{Jack Peterson}

\date{\today}

\begin{abstract}
We define and analyze three mechanisms for getting common knowledge, \emph{a posteriori} truths about the world onto a blockchain in a decentralized setting. We show that, when a reasonable economic condition is met, these mechanisms are individually rational, incentive compatible, and decide the true outcome of valid oracle queries in both the non-cooperative and cooperative settings. These mechanisms are based upon repeated games with two classes of players: \textit{queriers} who desire to get common knowledge truths onto the blockchain and a pool of \textit{reporters} who posses such common knowledge. Presented with a new oracle query, reporters have an opportunity to report the truth in return for a fee provided by the querier. During subsequent oracle queries, the querier has an opportunity to punish any reporters who did not report truthfully during previous rounds. While the set of reporters has the power to cause the oracle to lie, they are incentivized not to do so. 
\end{abstract}

\maketitle

\section{Introduction}

\subsection{Background}
In order for smart contracts to condition their execution on the state of the world, they need access to information about the world. While smart contracts can verify \emph{a priori} claims with mathematical or cryptographic certainty, they cannot independently verify \emph{a posteriori} claims about the world with the same assurances. As a matter of epistemological necessity, smart contracts which condition their behavior on \emph{a posteriori} knowledge must rely on trusted oracles to provide that knowledge. As a result, we can trust these smart contracts only if we can trust their oracles.

With no possibility of mathematical or cryptographic verification of \emph{a posteriori} claims about the world, we instead look to \emph{economic incentives} when considering whether to trust an oracle. We require that the cost (to the oracle operators) of lying be greater than the benefit. More specifically, we require that truth-telling be incentive compatible. We also want the operation of the oracle to have a non-negative expected return for the operators. That is, we require that the operation of the oracle be individually rational. Finally, we want the oracle to be decentralized in order to avoid both censorship and a single point of failure.

A common approach to designing such an oracle is to create a coordination game in which individual human players are presented with an oracle query and are asked to report the correct outcome by staking some tokens~\cite{Buterin_2014,Sztorc_2014}. The oracle outputs whichever outcome received the most stake as the ``winning outcome", and then players are rewarded if and only if they staked in agreement with the winning outcome. The hope in these ``Schelling scheme" approaches is that the truth will act as the Schelling point of a coordination game, which would result in the oracle returning the true outcome to the oracle query. Although these approaches are appealing because they are easy to implement and have a high degree of social scalability, they have serious drawbacks that make their real-world success unlikely.

First, Schelling points themselves are an informal solution concept used in the context of coordination games in which pre-play communication is incomplete or impossible, and in bargaining games in which players cannot make binding agreements~\cite{Schelling_1960}. However, in the open blockchain setting, players can freely engage in pre-play communication (via Reddit, Twitter, email, etc) and make binding agreements (for example, via smart contracts). So the Schelling point solution concept is not one that is known to be applicable in the types of strategic settings in which blockchain oracles operate. 

Second, the coordination game approach to oracle design can be incentive compatible only in the non-cooperative model. As soon as players are able to make binding agreements, they can perform bribing attacks and form coalitions that are large enough to make the oracle lie without members of the coalition receiving any penalty~\cite{Buterin_2015}. If players can form coalitions, we cannot rely on truth-telling being incentive compatible for mechanisms following the coordination game paradigm. The root of the problem is that players are rewarded for agreeing with the majority \emph{whether or not} the majority tells the truth.

For these reasons we think it is unlikely that the coordination-game approach to decentralized oracle design will work in practice. We desire an oracle that is incentive compatible in the cooperative model -- where players can engage in pre-play communication and make binding agreements.

In addition to being incentive compatible in isolation, one must also consider the trustworthiness of oracles in the presence of ``extraneous incentives": truth-telling must be incentive compatible when the output of the oracle is consumed to control the irreversible payout of large amounts of cryptocurrency. For example, every bet placed on a decentralized betting platform increases the gross incentive to make the platform's oracle lie. Many existing blockchain oracle designs do not explicitly consider the incentives introduced by the consumption of the oracle's output when analyzing the incentive compatibility of their mechanisms. (With the exception of Augur's oracle design~\cite{Peterson_2018}, we are unaware of \textit{any} proposed blockchain oracles, centralized or decentralized, that explicitly quantify and address this risk.) However, explicit consideration of such extraneous incentives is crucial when considering the security of oracles to be deployed for real-world use. For a given oracle design, we can quantify this risk and state precisely how much extraneous incentive the design can handle before losing incentive compatibility.

\subsection{Our Approach}

In this paper, we describe a new approach to oracle design that does not follow the coordination-game paradigm. For the mechanisms in this paper, truth-telling is -- under certain reasonable economic conditions -- incentive compatible in both the non-cooperative and cooperative game-theoretic models. A large coalition may form that makes the oracle lie, but members of such a coalition are severely penalized. Indeed, they are penalized more than they might gain from causing the oracle to lie. This stands in contrast to mechanisms based on coordination games, which \emph{reward} such large coalitions for lying.

At a high level, we begin with a separation of concerns. We create a distinction between those who want to get common knowledge \textit{a posteriori} information on-chain and those who possess such common knowledge. We refer to the former as \textit{queriers} and the latter as \textit{reporters}. For example, a trustless, decentralized platform for betting on the outcomes of elections would be a querier because it wants to get the true outcomes of elections on-chain in order to settle bets. A collection of humans that tells the platform which candidates have won would be a set of reporters.

We then create mechanisms based upon repeated games -- played by queriers and reporters -- wherein each stage game corresponds to an oracle query and the outcome of each stage game determines how the oracle responds to the query. We show that under certain economic conditions (which explicitly include any extraneous incentives introduced by any consumption of the oracle's output) there exists equilibrium behavior in these stage games that results in the oracle returning the true outcome of real world events. These first incentive compatibility proofs take place in the non-cooperative model and assume that the set of queriers and the set of reporters is disjoint.

We present three such mechanisms. Each mechanism is more complex than the last, but also has better scalability properties. We more closely examine the conditions under which our results hold and discuss the weaknesses of our approach. We also consider the cooperative game-theoretic model, where the queriers and reporters can form coalitions (or may even be the same people). We show that, with some additional conditions, our results hold in the cooperative model as well.

\section{Definitions}

\noindent \textbf{Definition} (\textit{Outcome space}).
For all events $\mathtt{E}$, an \emph{outcome space} of $\mathtt{E}$, denoted $\Omega_{\mathtt{E}}$, is a finite set of possible outcomes of $\mathtt{E}$. We require that every outcome space contain the special element \texttt{Invalid}, and that no outcome space contain the special element \texttt{Abstain}. When the event $\mathtt{E}$ is clear from context, we may drop the subscript and denote the outcome space $\Omega$.
\medskip

\noindent \textbf{Definition} (\textit{Oracles and queries}).
An \emph{oracle} is any algorithm that accepts as input an event $\mathtt{E}$, a corresponding outcome space $\Omega$, (and, optionally, some additional arguments) and outputs some $\omega \in \Omega$. A call to the oracle is referred to as a \emph{query}.
\medskip

\noindent \textbf{Definition} (\textit{Common knowledge}).
A proposition $P$ is said to be \emph{common knowledge} among a group of agents $G$ if all agents in $G$ know $P$, they all know that they all know $P$, they all know that they all know that they all know $P$, and so on, \textit{ad infinitum}~\cite{Aumann_1976}.
\medskip

The group of agents $G$ is the collection of all users who interact with the oracle. Informally, one may consider a proposition to be common knowledge if it can be quickly verified by any user with access to the World Wide Web.
\medskip

\noindent \textbf{Definition} (\textit{True outcome}).
For every oracle query with arguments $\mathtt{E}$ and $\Omega$, we define a unique outcome in $\Omega$ to be the \emph{true outcome} for the query. We denote such an outcome \texttt{True}, and it is defined as follows.  If there exists a unique outcome $ \omega \in \Omega \setminus \left\{ \texttt{Invalid} \right\}$ such that -- at the time of the oracle query -- it is common knowledge that the outcome of event $\mathtt{E}$ is $\omega$, then $\omega$ is the true outcome for the query. Otherwise, $\texttt{Invalid}$ is the true outcome for the query.
\medskip

\noindent \textbf{Definition} (\textit{False outcome}).
For every oracle query, every outcome in $\Omega$ that is not \texttt{True} is a \emph{false outcome}.
\medskip

It is important to note that simply corresponding to objective reality is not a sufficient condition for an outcome to be the \texttt{True} outcome for a query. The fact that the outcome corresponds to objective reality must also be common knowledge at the time of the query.
\medskip

\noindent \textbf{Definition} (\textit{Valid query}).
A query whose \texttt{True} outcome is not $\texttt{Invalid}$ is referred to a \emph{valid query}.
\medskip

Using this terminology, our objective for this paper is to construct a decentralized, incentive compatible, individually rational mechanism that decides the \texttt{True} outcome of valid queries.
\medskip

\noindent \textbf{Definition} (\textit{$\Omega$-partition}).
If $T$ is a finite set of tokens, $\mathtt{E}$ is an event, and $\Omega$ is an outcome space of $\mathtt{E}$, then an \textit{$\Omega$-partition of $T$} is an indexed family of $|\Omega|+1$ mutually disjoint subsets of $T$ (referred to as \textit{cells}) indexed by $\Omega \cup \{\texttt{Abstain}\}$ where the union of the cells is $T$. An $\Omega$-partition may be represented succinctly using the indexed-family notation $\textbf{C}=(C_{\omega})_{\omega \in \Omega \cup \left\{\texttt{Abstain} \right\} }$, or in expanded form via $\left\{ C_{\texttt{Abstain}}, C_{\omega_{1}}, \ldots, C_{\omega_{|\Omega|}} \right\}$.
\medskip

Colloquially, an $\Omega$-partition of a set of tokens is simply the separation of the tokens into ``piles'' (cells) that are labeled by the outcomes in $\Omega \cup \{\texttt{Abstain}\}$. Such partitions arise naturally in the context of voting with tokens. For example, suppose that everyone who owns at least one token in a set of tokens, $T$, is asked to cast a vote in favor of some outcome in $\Omega$. If we separate the tokens into cells according to how the owner of each token voted, the resulting partition would be an $\Omega$-partition of $T$. (Any tokens owned by someone who refused to vote is put into the pile labelled ``\texttt{Abstain}''.)

Next, we develop notation for a simple algorithm that asks a player to report the outcome of some event. The player's response, along with the set of tokens controlled by the player, are returned.
\medskip

\noindent \textbf{Definition} (\textit{Report}).
The algorithm $Report$ takes as input a tuple $(j, \mathtt{E},\Omega, T)$, where $T$ is a set of tokens, $j$ is the owner of at least one token in $T$, $\mathtt{E}$ is an event, and $\Omega$ is an outcome space of $\mathtt{E}$. The owner, $j$, is asked to report which element $\omega$ in $\Omega \cup \left\{ \texttt{Abstain} \right\}$ is \texttt{True}. If $j$ fails to respond then $\omega$ is understood to be \texttt{Abstain}. $Report$ returns the tuple $(\omega, R)$, where $R$ is the set of all tokens in $T$ that are owned by $j$.
\medskip

Next, we define an important algorithm referred to as the \emph{fork}. The fork is not an oracle, but will be used as an important subroutine in the oracles we construct in this paper. In brief, the fork is the process whereby owners of tokens stake their tokens on some outcome as a response to a query.
\medskip

\noindent \textbf{Definition} (\textit{Fork}).
The algorithm $\mathpzc{F}$, referred to as the \emph{fork}, accepts as input a tuple $(\mathtt{E}, \Omega, T)$ -- where $\mathtt{E}$ is an event, $\Omega$ is an outcome space of $\mathtt{E}$, and $T$ is a finite set of tokens -- and returns an $\Omega$-partition of $T$.

At a high level, $\mathpzc{F}$ works as follows. Each owner of tokens in $T$ is queried to ask which outcome in $\Omega$ is \texttt{True}. The owner's tokens are assigned to the cell that corresponds to the outcome they reported. If the owner does not respond (or if their response is not in $\Omega$), then their token are put in cell $C_{\texttt{Abstain}}$. Once all tokens in $T$ have been assigned to a cell, $\mathpzc{F}$ returns the collection of cells, which is an $\Omega$-partition of $T$. In pseudocode:

\begin{lstlisting}[language=json,escapeinside={*}{*}]
def *$\mathpzc{F}({\mathtt{E}},\Omega, T)$:*
  
  //begin with all cells empty
  for each *$\omega \in \Omega \cup \left\{ \texttt{Abstain} \right\}$*:
    *$C_{\omega} \leftarrow \emptyset$*
  endforeach

  //query all token owners for reports
  for each owner *j* of tokens in *$T$*:
    *$( \omega, R ) \leftarrow Report(j,\mathtt{E}, \Omega, T)$*
    //put the owner's tokens in the cell corresponding to the reported outcome
    *$C_{\omega} \leftarrow C_{\omega} \cup R$*
  endforeach
  
  //return the *$\Omega$-partition* of *$T$*
  return *$\left\{ C_{\texttt{Abstain}}, C_{\omega_{1}}, \ldots, C_{\omega_{|\Omega|}} \right\}$*  

enddef
\end{lstlisting}
\medskip

For our purposes, all calls to $\mathpzc{F}$ are assumed to be public, as are the resulting outputs. (In practice, $\mathpzc{F}$ would be implemented as a smart contract on a blockchain, and its entire history of calls and responses would be public.) The \texttt{for} loop in which token owners are queried may be run in parallel so that all token owners are queried simultaneously.

Finally, we define two simple subroutines: $Pay$ and $PluralityWinner$.
\medskip

\noindent \textbf{Definition} (\textit{Pay}).
The subroutine $Pay$ accepts as input a tuple $(T,\phi)$, where $T$ is a finite set of tokens and $\phi$ is some amount of currency. Each owner of tokens in $T$ is given a pro rata share of $\phi$. In particular, if $R \subseteq T$ is the set of tokens owned by some agent, then that agent will be paid a total of $\frac{\phi|R|}{|T|}$.
\medskip

\noindent \textbf{Definition} (\textit{PluralityWinner}).
The subroutine $PluralityWinner$ accepts as input an $\Omega$-partition and returns an outcome in $\Omega$ whose corresponding cell in the $\Omega$-partition is of the maximum size. Any ties are broken uniformly at random. (How ties are broken is unimportant for our purposes. Our results remain unchanged so long as the winner is chosen from among those outcomes whose corresponding cells have the maximum size.) In pseudocode:

\begin{lstlisting}[language=json,escapeinside={*}{*}]
def *$PluralityWinner(\{ C_{\texttt{Abstain}}, C_{\omega_{1}}, \ldots, C_{\omega_{|\Omega|}} \})$*
  //get outcomes in *$\Omega$* with largest corresponding cells
  *$ X \leftarrow \{ \omega \mid \omega \in \Omega \wedge \forall \gamma \in \Omega : |C_{\omega}| \ge |C_{\gamma}| \} $*
  
  //break any ties uniformly at random
  *$\hat{\omega} \xleftarrow{\$} X$*
  
  return *$\hat{\omega}$*
  
enddef
\end{lstlisting}
\medskip

Colloquially, $PluralityWinner$ simply interprets an $\Omega$-partition as the outcome of a plurality vote and returns the winner. Note that $PluralityWinner$ never returns \texttt{Abstain}, because $X \subset \Omega$ and $\texttt{Abstain} \notin \Omega$.

\section{Assumptions}

We model all agents as being rational. In particular, all agents come equipped with a von Neumann-Morgenstern utility function and always prefer actions that maximize their expected utility. For simplicity, we model agents as being risk neutral and having utility functions that are quasilinear in money. We assume agents are not budget constrained.

We further model all agents as being able to engage in costless communication with one another before making decisions. For the majority of the paper we will assume that the set of queriers and the set of reporters are disjoint, and we model the players as \emph{not} being able to make binding agreements with one another -- and so our first analyses will be done in the non-cooperative setting.

In section \ref{section:cooperative_model} we consider the effects of costless binding agreements and transferable utility by analyzing our approach in the cooperative model. Note that this also covers the case where the set of queriers and the set of reporters are not necessarily disjoint.\footnote{This is because the utility of a player that is both a querier and a reporter is the sum of the utilities of their ``queirer role" and their ``reporter role". So for the purposes of this analysis, the utility of a single ``querier-reporter" is indistinguishable from that of a coalition containing a querier and some positive number of reporters.}

Throughout the paper we assume that the center of each mechanism is a smart contract that has no \emph{a posteriori} knowledge of the world, and that the platform on which the smart contracts are executed is censorship resistant.

\section{The Simple Oracle, $\mathpzc{A}_{0}$}

\subsection{Motivation}

We construct a simple decentralized oracle that treats the fork $\mathpzc{F}$ as a plurality vote and returns the winning outcome. The key is to wrap $\mathpzc{F}$ with a mechanism that encourages token owners to report \texttt{True} when they are queried during $\mathpzc{F}$. When a certain ``economic soundness'' condition is met (see section \ref{section:economic_soundness_condition}) we can expect the winning outcome of the fork to be the \texttt{True} outcome of the event. In its most basic form, the algorithm consists of:

\begin{itemize}
\item Beginning with a reporting pool of tokens of equal value
\item Paying a reporting fee to the owners of tokens in the reporting pool before calling $\mathpzc{F}$ and
\item Permanently removing from the reporting pool any tokens that were not used to report \texttt{True} during the previous query 
\end{itemize}

Tokens removed from the reporting pool no longer earn their owners a reporting fee, so they are expected to have strictly lower value than tokens that remain in the pool. The price difference serves as an incentive for agents to report \texttt{True}.

\subsection{Construction}

The simple oracle, denoted $\mathpzc{A}_{0}$, works as follows. We create an initial finite set, $T_{genesis}$, of tokens that have no intended value outside of their use in this context. This set of tokens serves as the initial reporting pool. When querying the oracle, the caller pays the oracle a fee, denoted $\phi$. The oracle distributes this fee to owners of tokens in the reporting pool.

The oracle passes the query to the fork, which asks the owners of tokens in the reporting pool to report the \texttt{True} outcome of the event. The response from the fork is interpreted as the outcome of a plurality vote, and the outcome with the most votes is returned by the oracle.

It is those that query the oracle -- assisted by smart contracts -- that execute the algorithm $\mathpzc{A}_{0}$. In particular, it is those that query the oracle that determine which tokens were used to report truthfully during the previous call, and thus which token owners will be paid during the next call. While token owners decide the outcome that the oracle returns, it is those that query the oracle in the future that determine whether the previous response was true.

After the oracle returns an outcome, all tokens in the reporting pool that were not used to tell the truth are removed from the reporting pool for the next round. In pseudocode:

\begin{lstlisting}[language=json,escapeinside={*}{*}]
//initial state
*$C_{\texttt{True}}^{0} \leftarrow T_{genesis}$*
*$i \leftarrow 0$*
  
def *$\mathpzc{A}_{0}(\mathtt{E},\Omega,\phi)$*:
  //increment query counter
  *$i \leftarrow i+1$*
  
  //update the reporting pool
  //only truth-tellers from previous query remain in pool
  *$T_{i} \leftarrow C_{\texttt{True}}^{i-1}$*
  
  //pay owners of tokens in *$T_{i}$*
  *$Pay(T_{i},\phi)$*
  
  //call *$\mathpzc{F}$* with inputs *$({\mathtt{E}},\Omega,T_{i})$*
  *${\{ C_{\texttt{Abstain}}^{i}, C_{\omega_{1}}^{i}, \ldots, C_{\omega_{|\Omega|}}^{i} \} \leftarrow \mathpzc{F}(\mathtt{E},\Omega,T_{i})}$*
  
  //select winning outcome
  *$\hat{\omega}_{i} \leftarrow PluralityWinner(\{ C_{\texttt{Abstain}}^{i}, C_{\omega_{1}}^{i}, \ldots, C_{\omega_{|\Omega|}}^{i} \})$*
  
  //return winning outcome
  return *$\hat{\omega}_{i}$*
  
enddef
\end{lstlisting}

\section{Analysis of $\mathpzc{A}_{0}$}\label{section:analysis_of_a_0}

\subsection{Introduction}\label{section:simple_oracle_anaysis_intro}
Our goal is to design an incentive compatible, individually rational mechanism that implements a decision function that outputs the \texttt{True} outcome for a valid oracle query. We will show that when a certain ``economic soundness condition'' is satisfied (see Section \ref{section:economic_soundness_condition}) the simple oracle $\mathpzc{A}_{0}$ is such a mechanism.

The execution of the oracle $\mathpzc{A}_{0}$ is modeled as a repeated game with two classes of players: reporters and a querier. Each stage game of the repeated game is associated with an oracle query, and is modeled as a sequential game that operates as follows:

\begin{enumerate}
\item Each reporter chooses an outcome in response to the oracle query associated with the current stage game
\item The querier chooses how to update the reporting pool
\end{enumerate}

We will show that $\mathpzc{A}_{0}$ is incentive compatible by showing that there exists a Pareto efficient, subgame-perfect Nash equilibrium in the stage game which results in $\mathpzc{A}_{0}$ returning the \texttt{True} outcome for the oracle query. In particular, we will show that our desired player behavior -- where every reporter always reports the \texttt{True} outcome and the querier always removes from the reporting pool all and only those tokens used to lie -- is in equilibrium in the stage game. We will then show that $\mathpzc{A}_{0}$ is individually rational by demonstrating that the payouts for all players at this equilibrium are positive and strictly greater than their minmax payouts.

\subsection{The Economic Soundness Condition}\label{section:economic_soundness_condition}

Let $I_{i,j}$ denote the benefit to reporter $j$ from the oracle returning a false outcome in response to the $i$th oracle query. It is important to note that $I_{i,j}$ is intended to capture \textit{all} benefit to reporter $j$ -- including any ``extraneous" benefit --  from the oracle returning a false outcome in response to the $i$th oracle query. These benefits may be paid out in any currency. (We do not assume, for instance, that these benefits are paid out with tokens in $T$.) For example, if the oracle query is being used to determine payouts on a prediction market for a national election, and reporter $j$ has placed a large bet on the losing candidate, the value $I_{i,j}$ would include the value of reporter $j$'s bet. Similarly, any losing secondary bets (\textit{e.g.}, derivatives or other side-bets on the outcome, which may be denominated in entirely different currencies) placed by reporter $j$ are also included in the value $I_{i,j}$. 

Let $I_i=\sum\limits_{j}^{} I_{i,j}$ denote the total benefit received by all reporters from the oracle returning a false outcome in response to the $i$th oracle query. This represents the maximum total collective benefit (over all reporters) that could be gained from the oracle returning a false outcome to the $i$th query.

Let $p_i$ denote the market price of a token in the $i$th reporting pool $T_{i}$, and let $p_{i}^{\prime}$ denote the market price of a token in $T_{genesis} \setminus T_{i}$. (That is, $p_{i}^{\prime}$ denotes the market price of a token that has been removed from the reporting pool for lying.)

\medskip
\noindent \textbf{Definition} (\textit{Economic soundness condition}).
We say the \textit{economic soundness condition} is satisfied for the $i$th oracle query if $I_{i} < \frac{1}{2}(p_{i+1}-p_{i+1}^{\prime})|T_{i}|$.
\medskip 

When $i$ can be inferred from context, we may omit the subscripts and simply say that the economic soundness condition is satisfied when $I < \frac{1}{2}(p-p^{\prime})|T|$. If one assumes that tokens which have been removed from the reporting pool have zero value (that is, in the case where $\forall i, p^{\prime}_{i} = 0$), the economic soundness condition can be expressed as $I < \frac{1}{2}p|T|$, and can be interpreted as saying that the total collective benefit of causing the oracle to lie is less than half of the market cap of the reporting pool.

The motivation behind this definition is as follows. In order for the oracle $\mathpzc{A}_{0}$ to return a false outcome in response to the $i$th oracle query, at least half of all tokens in the $i$th reporting pool -- that is, at least $\frac{1}{2} |T_{i}|$ tokens -- must be used to lie or abstain (otherwise, the \texttt{True} outcome would necessarily receive the most votes and thus become the winner). Each token used to lie or abstain loses $p_{i+1}-p^{\prime}_{i+1}$ in value. Thus the minimum total cost of causing $\mathpzc{A}_{0}$ to return a false outcome in response to the $i$th oracle query is given by $\frac{1}{2}(p_{i+1}-p_{i+1}^{\prime})|T_{i}|$.

Colloquially, then, the economic soundness condition is satisfied exactly when the total cost of forcing the oracle to lie exceeds the total collective benefit -- including all ``extraneous" benefit -- from doing so. As we will show, this condition is necessary and sufficient for $\mathpzc{A}_{0}$ to be an incentive compatible and individually rational implementation of our desired truth-telling decision function.

It is not surprising that our most important results are predicated on the economic soundness condition being satisfied. All incentive compatible oracles -- even centralized ones -- necessarily have an analogous soundness condition: if the cost of causing the oracle to lie is less than one could steal by doing so, it would be irrational not to cause the oracle to lie. Unsurprising as this may be, it is important not to take economic soundness for granted. We are not, in general, guaranteed to have the economic soundness condition be satisfied, even if the oracle has been reporting \texttt{True} outcomes since its genesis. We investigate the conditions under which we may expect the economic soundness condition to be satisfied in practice in section \ref{section:simple_oracle_tenability_of_economic_soundness}.

\begin{figure*}
\centering
\includegraphics[width=3.75in]{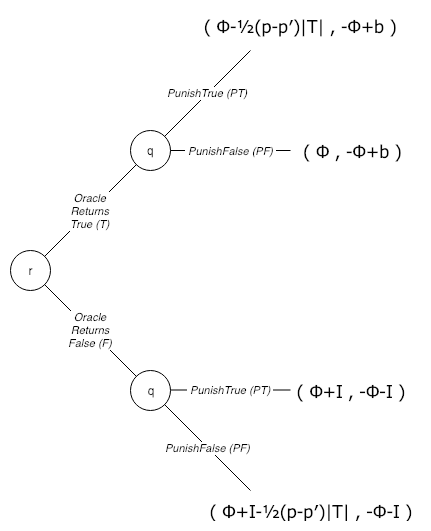}
\caption{The stage game shown in extensive form, as described in the proof of Theorem \ref{th:if_sound_and_true_then_punish_false}, with the set of reporters modeled as a single player ($r$) who may unilaterally decide the outcome of the oracle at the minimum possible cost. The querier ($q$) then decides which set of coins to remove from the reporting pool. Observe that, at each move, the querier is indifferent among her available moves, and therefore it is trivially the case that \textit{every} response by the querier is a best response and \textit{every} Nash equilibrium is a subgame-perfect Nash equilibrium. This game is shown again in normal form in Figure \ref{fig:querier_decision_stage_game_normal_form}.}
\label{fig:querier_decision_stage_game_extensive_form}
\end{figure*}

\subsection{Incentive Compatibility}\label{section:simple_oracle_incentive_compatibility}

In this section we show that, when the economic soundness condition is satisfied, the simple oracle $\mathpzc{A}_{0}$ is an incentive compatible implementation of our desired truth-telling decision function. In other words, when the economic soundness condition is satisfied and all players are behaving the way we want them to, no player can do better for themselves by unilaterally deviating from that behavior. We do this in the standard way, by showing that the resulting game contains an equilibrium strategy profile that results in the oracle deciding the \texttt{True} outcome.

The strategy spaces in the stage game are modeled as follows. Each stage game is associated with an oracle query which comes with an event $\mathtt{E}$ and an outcome space $\Omega$. The strategy set for each reporter is $\{True,False\}$, modeling the choice reporters make during a fork. The strategy $True$ represents the reporter choosing to report the $\texttt{True}$ outcome during the fork, while the strategy $False$ represents the reporter abstaining or reporting a false outcome during the fork. Afterwards, the querier chooses how to update the reporting pool by choosing from $\{PunishFalse, PunishTrue\}$, where $PunishFalse$ represents the querier removing from the reporting pool any tokens used to abstain or lie during the fork, and $PunishTrue$ represents the querier removing from the reporting pool any tokens used to report the \texttt{True} outcome during the fork.

\medskip
\noindent \textbf{Definition} (\textit{Honest play}).
Let \textit{honest play} refer to the strategy profile in which every reporter always chooses to report \texttt{True} and the querier always chooses the move $PunishFalse$.
\medskip

The following two theorems establish that honest play is in equilibrium in the stage game.

\begin{figure*}
\centering
\includegraphics[width=5in]{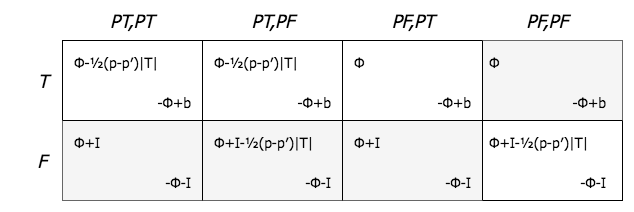}
\caption{The stage game from Figure \ref{fig:querier_decision_stage_game_extensive_form} shown in normal form. The set of reporters (modeled here as a single player as described in the proof of Theorem \ref{th:if_sound_and_true_then_punish_false}) is the row player and the querier is the column player. The pure strategy Nash equilibria -- when the economic soundness condition is satisfied -- are highlighted in gray.}
\label{fig:querier_decision_stage_game_normal_form}
\end{figure*}

\begin{theorem}\label{th:if_sound_and_true_then_punish_false}
If the economic soundness condition is satisfied then always choosing the move $PunishFalse$ is a best response by the querier to any strategy profile chosen by the reporters that results in the oracle returning $\texttt{True}$.
\end{theorem}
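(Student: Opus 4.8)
The plan is to exploit the fact that the querier moves last in the stage game: by the time she chooses how to update the reporting pool, the fork $\mathpzc{F}$ has already been run and the oracle's output for the current query is already fixed, so her choice of whom to punish is payoff-irrelevant \emph{within the stage game}. Hence she is indifferent among all of her moves, and in particular the move $PunishFalse$ maximizes her stage-game payoff and is a best response.

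In detail, I would proceed as follows. First, fix an arbitrary reporter strategy profile under which $\mathpzc{A}_0$ returns $\texttt{True}$ on the $i$th query; unwinding the definition of $\mathpzc{A}_0$, this means the partition returned by $\mathpzc{F}(\mathtt{E},\Omega,T_i)$ satisfies $|C_{\texttt{True}}^{i}| \ge |C_{\gamma}^{i}|$ for every $\gamma \in \Omega$ and that $PluralityWinner$ selects $\texttt{True}$. Second, I would write down the querier's stage-game payoff and verify that it takes the same value whether she plays $PunishFalse$, plays $PunishTrue$, or (in the richer extensive form of Figure \ref{fig:querier_decision_stage_game_extensive_form}) removes any subset of $T_i$ whatsoever. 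Her stage-game payoff has only three possible ingredients: the reporting fee $\phi$, which $\mathpzc{A}_0$ pays out via $Pay(T_i,\phi)$ \emph{before} $\mathpzc{F}$ is ever called and which therefore does not depend on her subsequent choice; the value she derives from the oracle's output on this query, which is the value of receiving $\texttt{True}$ and is likewise unaffected; and any extraneous payoff tied to the realized outcome of this query, which is again a function only of the already-fixed output $\texttt{True}$. The punishment choice affects only $T_{i+1}$, the reporting pool used in the \emph{next} query, and hence can influence only future stage games. Third, conclude: the querier's stage-game payoff is constant across all of her available moves, so she is indifferent, so $PunishFalse$ is a best response; and since this holds at every such decision node, the strategy that always plays $PunishFalse$ is a best response.

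I do not expect a genuine obstacle here — the content is essentially the bookkeeping observation that the punishment decision is payoff-irrelevant within a single stage game, its bite being felt only through the continuation value of the repeated game (which is precisely what the later results analyze). The one point that needs care is to keep the stage-game and repeated-game pictures separate: within the repeated game the querier's continuation value \emph{does} depend on her punishment choice, and that dependence is exactly what allows honest play to be sustained, but the present claim concerns only the one-shot stage game. It is also worth remarking that the economic soundness hypothesis, though stated, is not actually consumed in this direction; it is the companion theorem — that truthful reporting is a best response for the reporters against $PunishFalse$ — whose proof relies on it, since a reporting coalition large enough to flip the oracle would forfeit at least $\frac{1}{2}(p_{i+1}-p_{i+1}^{\prime})|T_i| > I_i$ in token value.
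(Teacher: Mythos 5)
Your argument is correct and takes essentially the same route as the paper: the paper's proof likewise reduces to the observation (stated explicitly in the caption of Figure \ref{fig:querier_decision_stage_game_extensive_form}) that the querier is indifferent among her moves once the fork's output is fixed, so always playing $PunishFalse$ is trivially a (weak) best response to any reporter profile yielding \texttt{True}. Your remark that the economic soundness hypothesis is not actually consumed in this direction is also consistent with the paper, which uses it only to conclude that honest play is a Pareto-efficient subgame-perfect equilibrium and, in Theorem \ref{th:if_sound_and_punish_false_then_true}, to make truthful reporting a best response for the reporters.
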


\begin{proof}
See appendix.
\end{proof}

\begin{theorem}\label{th:if_sound_and_punish_false_then_true}If the economic soundness condition is satisfied and the querier always chooses the move $PunishFalse$, then reporting the \texttt{True} outcome is always the best response by every individual reporter.
\end{theorem}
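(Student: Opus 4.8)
The plan is to establish the reporters' half of the assertion that honest play is an equilibrium (the querier's half being Theorem \ref{th:if_sound_and_true_then_punish_false}): fix an arbitrary query index $i$ and an arbitrary reporter $j$, assume — per the hypotheses and the definition of honest play — that the querier plays $PunishFalse$ and that every reporter other than $j$ reports $\texttt{True}$, and then compare $j$'s stage-game payoff from reporting $\texttt{True}$ against that from reporting $False$. Write $n := |R_j| \ge 1$ for the number of tokens $j$ owns in $T_i$, and set $p := p_{i+1}$, $p' := p'_{i+1}$. Two preliminary remarks: the reporting-fee share $\phi n/|T_i|$ is paid out before the fork and is independent of $j$'s report, so it cancels from the comparison; and economic soundness, $I_i < \tfrac12(p-p')|T_i|$, together with $I_i \ge 0$, forces $p > p'$, i.e.\ a token in the pool is strictly more valuable than one that has been removed.

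Reporting $\texttt{True}$ is immediate to evaluate: with $j$ and everyone else reporting $\texttt{True}$ we have $C^i_{\texttt{True}} = T_i$ and all other cells empty, so $\mathpzc{A}_{0}$ returns $\texttt{True}$; $PunishFalse$ removes nothing, so $j$ keeps $n$ pool tokens worth $np$, and collects no extraneous benefit because the oracle was truthful. Hence $j$'s payoff is $np$. Reporting $False$, on the other hand, strips $j$'s $n$ tokens from the pool (by $PunishFalse$, since $j$ did not report $\texttt{True}$), leaving them worth $np'$; the only way this can help $j$ is the extraneous benefit $I_{i,j}$, which is realized only if $\mathpzc{A}_{0}$ returns a false outcome. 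I would split on whether that can occur. Since every reporter other than $j$ reported $\texttt{True}$, the only cells of $\Omega$ that can be non-empty are $C^i_{\texttt{True}}$ (of size at least $|T_i|-n$) and, if $j$ voted for some false outcome $\omega$ rather than abstaining, $C^i_\omega$ (of size $n$); hence a false outcome can be returned — even via the random tie-break — only if $n \ge |T_i|-n$, that is, $n \ge \tfrac12|T_i|$. If that inequality fails then $\mathpzc{A}_{0}$ still returns $\texttt{True}$, $j$ gets no extraneous benefit, and $j$'s payoff is $np' < np$: strictly worse than reporting $\texttt{True}$. If it holds then $j$'s expected payoff is at most $np' + I_{i,j}$, and chaining $I_{i,j} \le I_i$, economic soundness, and $n \ge \tfrac12|T_i|$ in turn gives $np' + I_{i,j} \le np' + I_i < np' + \tfrac12(p-p')|T_i| \le np' + n(p-p') = np$: again strictly worse.

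Putting the cases together, reporting $\texttt{True}$ strictly beats reporting $False$ for reporter $j$ against honest play by the others; since $i$ and $j$ were arbitrary this holds in every stage game for every reporter, and combined with Theorem \ref{th:if_sound_and_true_then_punish_false} it makes honest play a subgame-perfect Nash equilibrium of the stage game.

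The only genuinely delicate step is the second sub-case — a reporter large enough to tip the oracle to a false outcome single-handedly — and it is precisely there that the economic soundness condition earns its keep: such a reporter must own at least half the pool, so the pool value $n(p-p')$ that $PunishFalse$ destroys already dominates the full soundness bound $\tfrac12(p-p')|T_i|$ and hence dominates $I_{i,j}$. Points I would be careful about in the write-up are: the tie-break knife-edge $n = \tfrac12|T_i|$, where a false vote wins the plurality only with probability $\tfrac12$ (harmless, since the bound above already charges the full $I_{i,j}$); the fact that abstaining, unlike voting for a false outcome, can never change the oracle's output when the other reporters are honest (so it falls under the first sub-case except in the degenerate situation where $j$ owns the entire pool, which the inequality $n \ge \tfrac12|T_i|$ still covers); and the assumption — which I would state explicitly, since it is implicit in the model — that each reporter's extraneous benefit is nonnegative, so that $I_{i,j} \le I_i$, and so that a reporter with $I_{i,j} \le 0$ trivially prefers honest play.
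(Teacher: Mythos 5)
Your calculation is sound as far as it goes, and it is in the same spirit as the paper's (a direct payoff comparison in which the economic soundness bound $I_i < \tfrac12(p-p')|T_i|$ kills the gain from lying, with the key observation that a reporter who can single-handedly flip the outcome must own at least half the pool). But there is a quantifier gap: you fix the opponents' profile to be honest play (``every reporter other than $j$ reports \texttt{True}'') and only show that \texttt{True} is a best response to that profile. The theorem's ``always,'' as the paper's own proof makes explicit, means best response \emph{no matter what choices are made by the remaining reporters} (the querier's move being pinned by hypothesis). The paper handles this by collapsing the other reporters' behavior into a chance move: the oracle returns \texttt{True} with some arbitrary probability $q\in[0,1]$, the payoff to $j$ for reporting \texttt{True} is $r_j\phi$ in either event, and the payoff for lying or abstaining is a convex combination of $r_j\phi-(p-p')r_j|T|$ and $r_j\phi+2r_jI-(p-p')r_j|T|$, both strictly below $r_j\phi$ under soundness. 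Your argument as written does not cover, for example, the situation where the other reporters lie in sufficient numbers that the oracle returns a false outcome regardless of $j$'s report; the conclusion still holds there (reporting \texttt{True} preserves $j$'s tokens under $PunishFalse$ while lying cannot gain more than the soundness bound allows), but that case must be argued, and your restriction silently excludes it. The weaker statement you prove does suffice for the downstream use (honest play is a Nash equilibrium), so the fix is modest: drop the assumption on the other reporters and run your same two-event comparison conditional on whether the oracle ends up returning \texttt{True} or not, as the paper does.

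Two smaller modeling discrepancies are worth noting, though they are not errors. First, you charge the deviating reporter with their individual benefit $I_{i,j}\le I_i$, whereas the paper conservatively assumes the lying reporters split the whole of $I$ pro rata at minimum cost, so a liar collects $2r_jI$; both accountings are defeated by the same soundness inequality. Second, your closing claim that this yields a subgame-perfect equilibrium of the stage game leans on the querier's side of the argument (the paper's observation that the querier is indifferent at every node, so every querier response is a best response in every subgame), which belongs to Theorem \ref{th:if_sound_and_true_then_punish_false} rather than to anything you have shown here.
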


\begin{proof}
See appendix.
\end{proof}

As an immediate result of Theorems \ref{th:if_sound_and_true_then_punish_false} and \ref{th:if_sound_and_punish_false_then_true} we can see that honest play is in equilibrium in the stage game. Moreover, the payouts (in the stage game) to all players during honest play are strictly greater than their minmax payouts: the minmax payout is $\phi-r_{j}|T|$ for the individual reporter $j$, and $-\phi - I$ for the querier. So, by Friedman's folk theorem~\cite{Friedman_1971}, honest play is also in equilibrium in the repeated game. (In fact, the folk theorems tell us that honest-play is in equilibrium in the infinitely repeated game without discounting, the infinitely repeated game \textit{with} discounting, and the finitely repeated game without discounting. This is nice, as it means our incentive compatibility result is robust against our choice of repeated-game model.) In other words, no individual player has any incentive to deviate from honest-play, and so our mechanism is incentive compatible.

\begin{theorem}\label{th:if_sound_then_ic}If the economic soundness condition is satisfied then $\mathpzc{A}_{0}$ is an incentive compatible mechanism that decides the \texttt{True} outcomes of oracle queries.
\end{theorem}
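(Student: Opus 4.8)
The plan is to derive Theorem~\ref{th:if_sound_then_ic} as a corollary of Theorems~\ref{th:if_sound_and_true_then_punish_false} and~\ref{th:if_sound_and_punish_false_then_true} together with a folk theorem for repeated games. First I would assemble the stage-game equilibrium: Theorem~\ref{th:if_sound_and_punish_false_then_true} says that, when the economic soundness condition holds and the querier plays $PunishFalse$, no individual reporter can improve on reporting \texttt{True}; Theorem~\ref{th:if_sound_and_true_then_punish_false} says that $PunishFalse$ is a best response for the querier against the reporters' honest profile (which causes the oracle to return \texttt{True}). Taken together these say precisely that honest play is a Nash equilibrium of the stage game. Since the stage game is sequential with the querier moving last, I would then upgrade this to subgame perfection using the observation accompanying Figure~\ref{fig:querier_decision_stage_game_extensive_form}: the querier's payoff is fixed by the reporters' profile and does not depend on which tokens she removes, so she is indifferent at every one of her decision nodes; hence $PunishFalse$ is a best response in every subgame rooted at a querier move, and honest play --- already a Nash equilibrium of the whole game --- is a subgame-perfect Nash equilibrium of the stage game.

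Next I would lift this to the repeated game. The honest-play stage payoffs were already identified (in the discussion preceding the theorem) as strictly exceeding the players' minmax payoffs: reporter $j$ earns the pro rata reporting fee while keeping all of its tokens, which beats its minmax value $\phi - r_{j}|T|$, and the querier earns $-\phi$, which beats her minmax value $-\phi - I$ whenever $I > 0$. Feeding this into Friedman's folk theorem~\cite{Friedman_1971} yields that honest play is also an equilibrium of the repeated game; and, as the surrounding text notes, this goes through for the infinitely repeated game with and without discounting and for the finitely repeated game, so the conclusion is insensitive to the repeated-game model. (The degenerate case $I = 0$ --- in which there is no incentive to attack at all --- is subsumed by the even simpler observation that infinite repetition of a stage-game subgame-perfect equilibrium is automatically a subgame-perfect equilibrium of the repeated game.)

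Finally I would check that this equilibrium actually realizes the truth-telling decision function. Along the honest-play path every reporter reports \texttt{True} in every fork, so $C_{\texttt{True}}^{i} = T_{i}$ at each query $i$; hence $PluralityWinner$ returns \texttt{True} and $\mathpzc{A}_{0}$ outputs the \texttt{True} outcome, and since $PunishFalse$ removes nothing we have $T_{i+1} = T_{i} = T_{genesis}$, so the argument recurs for all $i$. Thus there exists a subgame-perfect Nash equilibrium of the repeated game under which $\mathpzc{A}_{0}$ returns the \texttt{True} outcome of every query --- which is exactly the assertion that $\mathpzc{A}_{0}$ is an incentive compatible mechanism that decides \texttt{True} outcomes. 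I expect the only real difficulty to be expository rather than mathematical: stating the folk-theorem invocation cleanly (matching the exact hypotheses of Friedman's theorem, handling the $I = 0$ boundary, and being clear that the relevant ``incentive compatibility'' is the existence of this equilibrium path), since the substantive work has already been done in Theorems~\ref{th:if_sound_and_true_then_punish_false} and~\ref{th:if_sound_and_punish_false_then_true}.
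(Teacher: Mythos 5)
Your proposal is correct and takes essentially the same route as the paper: the paper's proof of this theorem simply cites Theorems~\ref{th:if_sound_and_true_then_punish_false} and~\ref{th:if_sound_and_punish_false_then_true}, with the folk-theorem lift to the repeated game (via the minmax comparison and Friedman's theorem) spelled out in the surrounding text exactly as you do. One minor slip: along honest play the querier's stage payoff is $-\phi + b$ rather than $-\phi$, but this only strengthens the comparison with her minmax value and does not affect the argument.
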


\begin{proof}
The result follows immediately from Theorems \ref{th:if_sound_and_true_then_punish_false} and \ref{th:if_sound_and_punish_false_then_true}.
\end{proof}

\subsection{Individual Rationality}\label{section:simple_oracle_individual_rationality}

We have shown that when the economic soundness condition is satisfied, honest-play is in equilibrium for all players engaging with the oracle. Of course, in the real world, we cannot \textit{force} agents to engage with our mechanism. If they are to engage, they must do so willingly. Since players have the choice of not participating, we want our mechanism to satisfy an individual rationality constraint. That is, we want it to be the case that the outcomes for honest-play are better (or at least not worse) for all players than they would achieve by not playing at all. To show that our mechanism is individually rational, we must show that the equilibrium at honest-play results in a non-negative payout for all players.

As in the proof of Theorem \ref{th:if_sound_and_true_then_punish_false}, let $b$ denote the benefit the querier receives when the oracle returns the \texttt{True} outcome, and recall that $\phi$ denotes the fee paid by the querier to the reporters.

\begin{theorem}\label{th:if_sound_and_small_fee_then_ir}If the economic soundness condition is satisfied and $b > \phi$ then the simple oracle $\mathpzc{A}_{0}$ is individually rational.
\end{theorem}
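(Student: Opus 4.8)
The plan is to evaluate each player's payoff along honest play and verify it is both positive and strictly above that player's minmax value; by definition (Section~\ref{section:simple_oracle_anaysis_intro}) this is exactly what individual rationality of $\mathpzc{A}_0$ amounts to. Theorems~\ref{th:if_sound_and_true_then_punish_false} and~\ref{th:if_sound_and_punish_false_then_true} already guarantee, under the economic soundness condition, that honest play is a subgame-perfect equilibrium of the stage game, so there is a well-defined ``equilibrium at honest play'' to analyze; and, as noted in the folk-theorem remark above, honest play is then also an equilibrium of the repeated game, whose payoff is the (possibly discounted) sum of the per-stage payoffs. Hence it suffices to compute payoffs in a single stage game.

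For the querier: along honest play every reporter reports \texttt{True}, so the plurality winner of the fork is the \texttt{True} outcome and the oracle answers correctly. The querier collects her benefit $b$, having paid the fee $\phi$, and her closing move $PunishFalse$ costs her nothing (as observed in Figure~\ref{fig:querier_decision_stage_game_extensive_form}, she is indifferent among her available moves). Her stage payoff is thus exactly $b-\phi$, which is strictly positive by the hypothesis $b>\phi$, and which exceeds her minmax value $-\phi-I$ since $b>0\ge -I$. In particular, $b>\phi$ is not merely sufficient but tight for the querier.

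For an arbitrary reporter $j$ owning the set $R_j$ of tokens in the current pool $T_i$: along honest play $j$ reports \texttt{True}, so none of $j$'s tokens are removed when the querier plays $PunishFalse$, and $j$ incurs no loss of token value. Since $Pay$ disburses the fee before the fork runs, $j$ also receives the pro rata share $\tfrac{\phi\,|R_j|}{|T_i|}$ regardless of what $j$ reports. As $\phi>0$ and $|R_j|\ge 1$, this payoff is strictly positive, and it dominates $j$'s minmax value $\phi-r_j|T|$, which reflects the token-value loss the other players can inflict on $j$. Combining the two cases, every player's honest-play payoff is positive and strictly above their minmax payoff, so $\mathpzc{A}_0$ is individually rational.

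There is no hard obstacle here; the step requiring the most care is correctly accounting for what enters a reporter's stage payoff -- namely, that the reporting fee is paid up front (hence received no matter what $j$ reports) and that an honest reporter's tokens are left untouched by the equilibrium move $PunishFalse$, so the price differential $p_{i+1}-p_{i+1}'$ is never charged against a truth-teller. The economic soundness condition enters only indirectly but indispensably: it is what makes honest play an equilibrium in the first place (via Theorems~\ref{th:if_sound_and_true_then_punish_false} and~\ref{th:if_sound_and_punish_false_then_true}), so that the ``equilibrium at honest play'' whose payoffs we evaluate actually exists.
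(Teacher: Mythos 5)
Your proposal is correct and follows essentially the same route as the paper: since the economic soundness condition makes honest play an equilibrium (Theorems~\ref{th:if_sound_and_true_then_punish_false} and~\ref{th:if_sound_and_punish_false_then_true}), it suffices to check the honest-play payoffs, which are a pro rata share of $\phi$ for each reporter and $b-\phi$ for the querier, non-negative when $b>\phi$. Your additional remarks (minmax comparisons, and the claim that $b>\phi$ is ``tight'') are harmless extras beyond what the paper's non-negativity criterion requires, though strictly speaking $b\ge\phi$ would already give non-negativity.
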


\begin{proof}
Suppose the economic soundness condition is satisfied. Then, since honest play is in equilibrium, it will suffice to show that the outcomes for each player during honest play are non-negative.
From Figure \ref{fig:querier_decision_stage_game_normal_form} we can observe that honest play results in individual reporters being paid a pro rata share of $\phi$, which is always non-negative. Also from Figure \ref{fig:querier_decision_stage_game_normal_form} we can see that the querier receives a payout of $-\phi + b$, which is non-negative when $b > \phi$. Thus, when the economic soundness condition is satisfied and $b > \phi$, the payout to all players is non-negative during honest play.
\end{proof}

\subsection{Tenability of Economic Soundness}\label{section:simple_oracle_tenability_of_economic_soundness}
Since all of the results above depend upon the economic soundness condition being satisfied, we will now examine whether it is reasonable to expect the economic soundness condition to be satisfied in practice. Naturally, speculation on the future value of a token can result in arbitrarily high token prices, so it is certainly always \textit{possible} for the economic soundness condition to be satisfied. However, we are interested in whether the reporting fees \textit{alone} can justify a high enough token price for the economic soundness condition to be satisfied.

In particular, we want to know whether the reporting fee can, simultaneously, be small enough that the querier is willing to pay it and large enough to make the market price of tokens (and therefore the market cap of the reporting pool) high enough to satisfy the economic soundness condition. As we will show, this depends very much on the market's current appetite for risk, how the reporting fee is chosen (as a function of $I$ and time), and how high a fee the queriers are willing to bear.

To aid the discussion, consider a betting platform that uses an instance of the oracle $\mathpzc{A}_{0}$ to report the outcomes of national elections. At any given time, the total value of all open bets on the platform is referred to as the \textit{open interest}, which is the maximum benefit that the set of reporters could gain by making the oracle lie: malicious reporters can steal open interest by betting on low-likelihood outcomes and then forcing the oracle to resolve to the outcome on which they bet. So, in this case, $I$ is the open interest.

To consider the simplest case, suppose that the querier (which, in this case, is the betting platform) always chooses $PunishFalse$, and that any tokens used to lie have zero value (that is, $\forall i, p^{\prime}_{i} =0$). Thus the minimum cost of causing the oracle to lie is $\frac{1}{2}p|T|$ and the economic soundness condition is satisfied if and only if $I < \frac{1}{2}p|T|$.

Now suppose the betting platform charges the bettors a fee that is some percentage $x$ of their bet size. Every bet incurs a fee, and all fees collected by the platform will be pooled together and used as the reporting fee when it queries the oracle to decide the outcome of an election. So $\phi = x I$ for every oracle query. For simplicity, assume that every election results in the same volume of bets, and so $I$ remains constant over multiple oracle queries.

Finally, let the market's expected current yield for tokens in $T$ be $Y$. In other words, we expect the market to behave in such a way that $Y=\frac{A}{p|T|}$, where $A$ is the sum of all reporting fees collected over one year. Therefore, if the betting platform makes $n$ oracle queries in one year, then $A=nxI$, and market behavior will result in a token price of $p=\frac{nxI}{Y|T|}$. Using this value for $p$ in the definition of the economic soundness condition, we can see that we may expect the economic soundness condition to be satisfied when $x>\frac{2Y}{n}$.

As we can see from this simple example alone, the tenability of the economic soundness condition is dependent on factors outside of the implementer's control. While the creators of the betting platform may be able to exert control over the fees they charge ($x$) and the number of times that they query the oracle in a given year ($n$), they cannot control the market's expectation of current yield ($Y$) or whether not their users are \textit{willing to pay} a high enough fee ($x$) to satisfy the condition $x>\frac{2Y}{n}$.

\medskip
\textbf{Example 1:}
Suppose the market expects a current yield of $30\%$ for holding tokens in $T$, and the betting platform is making one oracle query per month (so that $Y=0.3$ and $n=12$). Then if the bettors are unwilling to pay a fee of $5\%$ on their bets, we should not expect the economic soundness condition to be satisfied.

\medskip
\textbf{Example 2:}
Suppose the market expects a current yield of $25\%$ and the betting platform is making one oracle query per week (so that $Y=0.25$ and $n=52$). Then if the bettors are willing to pay a fee of $1\%$ on their bets, we should expect the economic soundness condition to be satisfied.
\medskip

In conclusion, we are not guaranteed to have the economic soundness condition be satisfied \textit{in general}. Under some conditions it is satisfied quite easily, and under other conditions it is not. Moreover, the tenability of the economic soundness condition depends on some factors outside of the oracle implementer's control, such as the market's appetite for current yield and user tolerance to the minimum required fees.

In plain terms, this is not a simple, drop-in ``oracle solution" that is certain to work for any project that needs an oracle. Projects that are considering implementing the oracles presented in this paper should give special consideration to how they structure their fees, the market's appetite for current yield, and whether their users will be tolerant of the minimum required fees.

\subsection{Weaknesses}\label{section:simple_oracle_weaknesses}

This mechanism achieves only a weak version incentive compatibility (Bayesian-Nash incentive compatibility). While honest play is in equilibrium, and while its resulting equilibrium is Pareto efficient and subgame-perfect, the strategies played during honest play are not strictly dominant. Indeed, the game that arises from $\mathpzc{A}_{0}$ has no strictly dominant strategies for any of the players at all. Reporting truthfully is a best response for individual reporters only if the querier always chooses $PunishFalse$. However, always choosing $PunishFalse$ is only a \textit{weakly} dominant strategy for the querier.

The truthfulness of $\mathpzc{A}_{0}$ hinges upon the querier choosing one particular weakly dominant strategy from among several available to her. As we have shown above, it is rational for her to always choose $PunishFalse$ during each stage game.  However, it is also rational for her to choose one of her other weakly dominant strategies (in the stage game setting). After all, as shown in Figure \ref{fig:querier_decision_stage_game_extensive_form}, the querier is indifferent between her available moves during the stage game. It is only when the querier considers the larger repeated game setting that always choosing $PunishFalse$ becomes more appealing than the alternatives.

Of course, queriers are unlikely to engage with the mechanism at all unless they intend to engage in honest play. This is because the mechanism is not individually rational outside of honest play, and so the querier would do better for herself by not engaging with the mechanism at all than to engage and play dishonestly. Nevertheless, the situation would be greatly improved if the querier's preference for honest play were strict.

An ideal mechanism would be \textit{dominant strategy incentive compatible} (DSIC), so that players could choose to play honestly without having to give any consideration to the behavior of other players. It is an open question whether there exists such a mechanism that can be implemented in a setting where the players have common knowledge of the truth but the center of the mechanism (\textit{i.e.}, the smart contract) does not.

A less ambitious goal would be to design a variation of $\mathpzc{A}_{0}$ for which always choosing $PunishFalse$ were a strictly dominant strategy for the querier in the stage game, even if there were still no dominant strategy available for individual reporters. While reporters would still have to reason about the future behavior of the querier before deciding whether to report truthfully, the resulting stage game would have just one equilibrium: honest play. This would be a clear improvement upon $\mathpzc{A}_{0}$, as $\mathpzc{A}_{0}$ results in multiple equilibria in the stage game -- only one of which results in the oracle returning \texttt{True}. As with the decentralized DSIC mechanism, it is an open question whether there exists a mechanism with this property than can be implemented in the setting where the center of the mechanism (\textit{i.e.}, the smart contract) does not posses common knowledge of truth.

Finally, this approach necessitates that the economic soundness condition be satisfied. As we saw in section \ref{section:simple_oracle_tenability_of_economic_soundness}, this condition is dependent upon things outside of the oracle implementer's control. It is impossible, for example, for the oracle implementer to prevent third-party derivatives being resolved by the oracle's outputs. These derivative bets can be made without the secondary bettors paying any fee to the reporters. As a result, third-party derivatives increase the value of $I$ but may not increase the market cap of reporting tokens, and thereby jeopardize the incentive compatibility of the oracle.  This is known as the ``parasite problem" and we conjecture that it is unsolvable for all public oracles, both centralized and decentralized.

\section{Scaling Strategy}
Real life execution of the simple oracle $\mathpzc{A}_{0}$ requires that participants agree on the state of the reporting pool $T$. The correct state of the reporting pool cannot be verified by a smart contract alone -- in particular, determining which outcomes were \texttt{True} for each previous oracle query cannot be done by a smart contract. It requires \textit{a posteriori} knowledge of what was common knowledge during each of the previous calls of $\mathpzc{F}$ by $\mathpzc{A}_{0}$. To verify the correctness of the current reporting pool, a new user must examine the output of \textit{every} previous call of $\mathpzc{F}$ by the oracle. For each $\Omega$-partition returned by $\mathpzc{F}$, the new user must decide which set of tokens corresponds to $C_{\texttt{True}}$. Only then can they determine the correct state of the current reporting pool. In brief, on-boarding a new user requires the new user to manually determine the \texttt{True} outcome for every previous oracle query. This does not scale.
 
In the following sections, we define and analyze oracles that do not need to call $\mathpzc{F}$ on every oracle query. Instead, we allow queriers to submit a proposed outcome along with their oracle query, and we give the reporters an opportunity to dispute the proposed outcome if they think it is false. If the proposed outcome is not disputed, then the oracle returns the proposed outcome without having to call the fork. (Our assumption that the smart contract platform is censorship resistant is critical here. An attacker that can censor dispute transactions can cause these new oracles to return false outcomes by preventing reporters from disputing false proposed outcomes.) If the proposed outcome \textit{is} disputed, then the oracle uses the fork to determine which outcome to return, just like $\mathpzc{A}_{0}$.

This creates a subgame for each oracle query, where the querier chooses whether to submit their query with the \texttt{True} outcome or a false outcome as the proposed outcome, then reporters decide whether or not to dispute the proposed outcome. We leverage the credible threat of a fork along with some bonds to make honest play incentive compatible in the subgame. The result is that, when the economic soundness condition is satisfied, the oracle is expected to return the \texttt{True} outcome of a query without having to call $\mathpzc{F}$, so the reporting pool does not get updated after every oracle query. On-boarding new users does not require them to manually verify the results of every previous oracle query, but only those which required a fork.

\section{An Oracle with a Single Dispute Round, $\mathpzc{A}_{1}$}
\subsection{Construction}
For this new oracle we introduce a \textit{dispute round} which leverages bonds and the credible threat of the fork. When the oracle is queried, the query must be accompanied by a \textit{tentative outcome}, $\hat{\omega} \in \Omega$, and some initial \textit{stake} on that outcome. Then begins a period of time, referred to as a \textit{dispute round}, during which token owners have the opportunity to dispute the tentative outcome in favor of some other outcome in $\Omega$ by adding some specific amount of stake to their chosen outcome.

If no dispute takes place during the dispute round, then the oracle returns $\hat{\omega}$ and the initial stake is returned to its original owner. Otherwise the oracle calls $\mathpzc{F}$ to determine the winning outcome, just as before. Any stake -- Whether it was the initial stake that came with the oracle query or stake placed during the dispute round -- that was placed on a losing outcome is transferred to those who staked on the winning outcome. In this way, token owners are incentivized to dispute any tentative outcomes that would not win a fork in favor of outcomes that would win a fork.

We will show that, at equilibrium, the oracle returns the \texttt{True} outcome without the reporting pool having to be updated.

Later in this section we will construct an oracle that uses a single dispute round and discuss its strengths and limitations. In the following section, we will define an oracle that uses multiple dispute rounds to address those limitations.

First, we define a few algorithms that will be used as subroutines in the following oracles.

\medskip
\noindent \textbf{Definition} (\textit{DisputeRound}).
The algorithm $DisputeRound$ accepts a tuple $(\mathtt{E},\Omega,\hat{\omega},\textbf{D},s)$, where $\mathtt{E}$ is an event, $\Omega$ is an outcomes space of $\mathtt{E}$, $\hat{\omega} \in \Omega$ is a tentative outcome, $\textbf{D}=(D_{\omega})_{\omega \in \Omega \cup \left\{\texttt{Abstain} \right\} }$ is an $\Omega$-partition of the set of tokens that have been staked on some outcome $\omega \in \Omega$ during the present oracle query, an $s$ is the amount of dispute stake required to dispute the tentative outcome. We let $d=|D_{\hat{\omega}}|$, and we require that $D_{\hat{\omega}}$ not be empty. That is, we require that the dispute round begins with some positive amount, $d > 0$ of stake on the tentative outcome.

Let $T$ be the set of tokens in the reporting pool. All owners of tokens in $T \setminus \bigcup \textbf{D}$ (that is, tokens that are in the reporting pool but have not already been used to stake on some outcome during the current oracle query\footnote{Here we are using the notation $\bigcup\limits \textbf{D}$ to denote the union of sets in \textbf{D}: \[\bigcup\limits \textbf{D} = \bigcup\limits_{X \in \textbf{D}} X = D_{\texttt{Abstain}} \cup D_{\omega_{1}} \cup \ldots \cup D_{\omega_{|\Omega|}}\].}) have the opportunity -- but not the obligation -- to dispute the tentative outcome in favor of some other outcome in $\Omega$.

A dispute consists of staking $s$ reporting tokens (referred to as \textit{dispute stake} in this context) on some outcome other than the market's current tentative outcome. In this paper, for simplicity, we say that disputes require double the stake on the current tentative outcome. Our results remain unchanged if disputes are be made to be $\alpha$ times the stake on the current tentative outcome, so long as $\alpha > 1$.

Dispute rounds have a fixed maximum time limit. If a dispute occurs in favor of outcome $\omega$ before the time limit, then $DisputeRound$ updates $D_{\omega}$ to include the new dispute stake and returns $(\omega,\textbf{D},\texttt{TRUE})$. Otherwise, $DisputeRound$ does not modify any cell of the $\Omega$-partition $\textbf{D}$ and returns $(\hat{\omega},\textbf{D},\texttt{FALSE})$.

At most one token holder may actually dispute a tentative outcome during any given dispute round. That is, while all token holders have the opportunity to dispute a tentative outcome, at most one token holder can actually perform the dispute.
\medskip

\noindent \textbf{Definition} (\textit{Distribute}).
The algorithm $Distribute$ accepts as input a tuple $(\textbf{D},\hat{\omega})$ where $\textbf{D}$ is an $\Omega$-partition (of dispute stake) and $\hat{\omega} \in \Omega \cup \{\texttt{Abstain}\}$. The algorithm pays out all the tokens in $\bigcup{\textbf{D}}$ to those reporters who own tokens in $D_{\hat{\omega}}$, in proportion to the number of tokens they own in $D_{\hat{\omega}}$. That is, if a token holder owns $X$ tokens in $D_{\hat{\omega}}$ then $Distribute$ will pay that token holder $\frac{X}{|D_{\hat{\omega}}|} |\bigcup{\textbf{D}}|$ tokens.
\medskip

Colloquially, $Distribute$ simply distributes the tokens in $\textbf{D}$, pro rata, to the reporters who staked on outcome $\hat{\omega}$.
\medskip

\noindent \textbf{Definition} (\textit{ChoiceByFork}).
The use of the fork as a fallback for deciding the winning outcome is expressed in the subroutine $ChoiceByFork$, described in pseudocode here:

\begin{lstlisting}[language=json,escapeinside={*}{*}]
def *$ChoiceByFork(\mathtt{E},\Omega,T, \textbf{D})$*:
  //call the fork but let only non-dispute stake participate
  *${\{ C_{\texttt{Abstain}}, C_{\omega_{1}}, \ldots, C_{\omega_{|\Omega|}} \} \leftarrow \mathpzc{F}(\mathtt{E},\Omega,T\setminus \bigcup \textbf{D})}$*
      
  //select winning outcome (remembering to include the dispute stake)
  *$\hat{\omega} \leftarrow PluralityWinner(\{ C_{\texttt{Abstain}} \cup D_{\texttt{Abstain}}, C_{\omega_{1}} \cup D_{\omega_{1}}, \ldots, C_{\omega_{|\Omega|}} \cup D_{\omega_{|\Omega|}} \})$*
  
  //redistribute dispute stake
  *$Distribute(\textbf{D}, \hat{\omega})$*
  
  //put all dispute stake in the cell corresponding to the winning outcome
  *$C_{\hat{\omega}} \leftarrow \bigcup \textbf{D} \cup C_{\hat{\omega}}$*
  
  //return winning outcome and the *$\Omega$*-partition
  return *$(\hat{\omega}, \{ C_{\texttt{Abstain}}, C_{\omega_{1}}, \ldots, C_{\omega_{|\Omega|}} \})$*
enddef
\end{lstlisting}
\medskip

With these definitions in place we are ready to describe the oracle $\mathpzc{A}_{1}$ -- the oracle with a single dispute round. A query to $\mathpzc{A}_{1}$ must come along with a tentative outcome $\hat{\omega} \in \Omega$ and some initial stake of $d>0$ tokens, which we denote $\{t_{1}, \ldots, t_{d}\}$. When the query is received, all reporters have an opportunity to dispute the tentative outcome by staking $2d$ tokens on any outcome other than the tentative outcome. If no dispute occurs, the oracle returns the tentative outcome and returns the initial stake back to the querier. If some reporter \textit{does} dispute the tentative outcome, then the oracle calls the fork to determine the winner.

In the event that the tentative outcome is disputed, the oracle will use the fork to determine the winning outcome, just as we did with $\mathpzc{A}_{0}$. We require that any stake that was placed in favor of some outcome during a dispute round must be used to report the same outcome during the fork. That is, if a reporter stakes $10$ tokens on outcome $\omega_{1}$ during a dispute round and then the oracle calls the fork, then that player has no choice but to use those $10$ tokens to report outcome $\omega_{1}$ during the fork. In other words, if a player has staked on an outcome, then they remain committed to that outcome in the event that a fork is called.

In the event that the oracle calls the fork, the oracle returns whatever outcome wins the fork (just as with $\mathpzc{A}_{0}$). Additionally, the initial stake and the dispute stake are redistributed to whichever players staked on the outcome that won the fork. In pseudocode:

\begin{lstlisting}[language=json,escapeinside={*}{*}]
//initial state
*$C_{\texttt{True}}^{0} \leftarrow T_{genesis}$*
*$i \leftarrow 0$*
  
def *$\mathpzc{A}_{1}(\mathtt{E},\Omega,\phi,\hat{\omega},\{t_{1},\ldots,t_{d} \})$*:
  //increment query counter
  *$i \leftarrow i+1$*
  
  //update the reporting pool
  //only truth-tellers from previous query remain in pool
  *$T_{i} \leftarrow C_{\texttt{True}}^{i-1}$*
  
  //pay owners of tokens in *$T_{i}$*
  *$Pay(T_{i},\phi)$*
  
  //init *$\Omega$*-partition of dispute stake
  for *$\omega \in \Omega \cup \{\texttt{Abstain}\}$*:
    if *$\omega == \hat{\omega}$*:
      *$D_{\omega} \leftarrow \{t_{1},\ldots,t_{d} \}$*
    else:
      *$D_{\omega} \leftarrow \emptyset$*
    endif
    *$\textbf{D} \leftarrow \{ D_{\texttt{Abstain}}, D_{\omega_{1}}, \ldots, D_{\omega_{|\Omega|}} \}$*
  endfor
  
  //have a dispute round
  *$ (\hat{\omega}_{i},\textbf{D},\texttt{DISPUTED}) \leftarrow DisputeRound(\mathtt{E}, \Omega, \hat{\omega},\textbf{D}, 2d)$*
  
  //if there was no dispute
  if *$\texttt{DISPUTED} == \texttt{FALSE}$*:    
    //then give the initial stake back to the querier
    *$Distribute(\textbf{D},\hat{\omega}_{i})$*
    
    //no reporting tokens are removed from the reporting pool
    *$C_{\texttt{True}}^{i}\leftarrow T_{i}$*
        
    //return tentative outcome
    return *$\hat{\omega}_{i}$*
  
  //else if there was a dispute  
  elseif *$\texttt{DISPUTED} == \texttt{TRUE}$*:
    
    //resort to fork
    *$(\hat{\omega}_{i}, \{ C_{\texttt{Abstain}}^{i}, C_{\omega_{1}}^{i}, \ldots, C_{\omega_{|\Omega|}}^{i} \}) \leftarrow ChoiceByFork(\mathtt{E},\Omega, T_{i}, \textbf{D})$*
  
    //return outcome that won the fork
    return *$\hat{\omega}_{i}$*
 
 endif
enddef
\end{lstlisting}

\section{Analysis of $\mathpzc{A}_{1}$}
\subsection{Introduction}
We want to show that if the economic soundness condition is satisfied then, at equilibrium, the oracle $\mathpzc{A}_{1}$ returns the \texttt{True} outcome of an oracle query \textit{without having to invoke a fork}. In section \ref{section:simple_oracle_incentive_compatibility} we showed that honest play during a fork is in equilibrium when the economic soundness condition is satisfied. So for the purposes of this section it will suffice to show that, if honest play is expected during a fork, then there exists a unique subgame-perfect equilibrium in the subgame induced by the dispute round for which:

\begin{itemize}
\item The querier submits their query with the \texttt{True} outcome as the tentative outcome.
\item  When the tentative outcome is the \texttt{True} outcome, it is not disputed.
\item When a tentative outcome is false, it is disputed in favor of the \texttt{True} outcome.
\end{itemize}

We will show that, when honest play is expected during forks, there exists a unique subgame perfect equilibrium in the dispute round game that satisfies these three properties.

\subsection{Unique Equilibrium}
The dispute round game is modeled as a sequential game in which the querier moves first by choosing whether to submit their query with the \texttt{True} outcome as the tentative outcome, or some false outcome as the tentative outcome. The querier necessarily stakes $d$ tokens on this tentative outcome. Then, the reporters choose whether or not to dispute the tentative outcome. If a reporter chooses to dispute the tentative outcome, they must do so by staking $2d$ tokens in favor of some other outcome. We are interested in the case where honest play is expected during a fork, so any play that results in a fork pays out as if the \texttt{True} outcome were returned.

\begin{theorem}\label{thm:single-dispute-round-incentive-compatibility}
When honest play is expected during a fork, there exists a unique subgame perfect equilibrium in the single dispute round game. The unique subgame perfect equilibrium results in the following behavior:
\begin{itemize}
\item The querier submits their query with the \texttt{True} outcome as the tentative outcome
\item If the querier submits the \texttt{True} outcome as the tentative outcome, then no reporter disputes.
\item If the querier submits a false outcome as the tentative outcome, then there exists a reporter who disputes in favor of the \texttt{True} outcome.
\end{itemize}
\end{theorem}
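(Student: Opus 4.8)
The plan is to solve the single dispute-round game by backward induction on the finite sequential game described just above: the querier moves first, choosing a tentative outcome $\hat{\omega}\in\Omega$ (either \texttt{True} or some false $\gamma$) and committing the initial stake of $d$ tokens, and then at most one reporter may dispute by staking $2d$ tokens on some other outcome, after which the oracle either returns the (un)disputed tentative outcome directly or -- if disputed -- falls back to $\mathpzc{F}$, which by hypothesis is resolved by honest play and hence returns \texttt{True}. The first step is to write down the payoffs at every terminal node, tracking token flows through $Distribute$ and $ChoiceByFork$ carefully. The reporting fee $\phi$ has already been paid pro rata to the pool before the dispute round, so it contributes only a constant; if no dispute occurs the initial stake is returned to the querier and the pool is left untouched; if a dispute occurs a fork is triggered, returns \texttt{True}, and $Distribute(\textbf{D},\texttt{True})$ hands the entire pot $\bigcup\textbf{D}$ to whoever staked on \texttt{True}. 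The consequences I would record are: a reporter who disputes a false tentative outcome in favor of \texttt{True} stakes $2d$ and recovers $3d$, a net gain of $d$ tokens, while the querier forfeits its $d$-token stake; a reporter who pointlessly disputes a \texttt{True} tentative outcome stakes $2d$ on a losing outcome and recovers nothing; and in every dispute branch no reporter has their tokens removed from the pool, since honest play during the fork means everyone else reports \texttt{True}.

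Next I would analyze the reporter stage in each branch. When the tentative outcome is \texttt{True}, a lone dispute cannot change the winner of the ensuing fork -- flipping it would take at least half the pool, and honest play means everyone else reports \texttt{True} -- so \texttt{True} is returned whether or not a reporter disputes, no extraneous benefit $I_{i,j}$ is realized in either case, and the would-be disputer merely throws away $2d$ tokens' worth of value; hence in any subgame-perfect equilibrium no reporter disputes a \texttt{True} tentative outcome, leaving the querier with $b-\phi$ and each reporter with a pro rata share of $\phi$. When the tentative outcome is a false $\gamma$, disputing in favor of some other false $\gamma'$ still triggers a fork returning \texttt{True} but yields no net token gain, so it is dominated by disputing in favor of \texttt{True}; and disputing in favor of \texttt{True} nets a reporter $d$ tokens (worth roughly $d\,p_{i+1}$) while ensuring $\gamma$ is \emph{not} returned. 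The key step here is to show that this gain exceeds that reporter's extraneous incentive $I_{i,j}$ to let $\gamma$ stand for at least one reporter, so that ``nobody disputes $\gamma$'' is not an equilibrium (that reporter would profitably deviate to disputing toward \texttt{True}); this is exactly where the economic soundness condition enters, since it bounds the reporters' aggregate extraneous incentive $I_i=\sum_j I_{i,j}$ against the reporting pool's market cap. Given this, in every subgame-perfect equilibrium some reporter disputes a false tentative outcome in favor of \texttt{True}, and the oracle returns \texttt{True}.

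Folding these continuations back to the querier's move is then routine. Submitting \texttt{True} yields $b-\phi$: the oracle returns \texttt{True}, the querier keeps its stake, and no fork is needed. Submitting any false $\gamma$ is strictly worse: by the previous paragraph it is disputed in favor of \texttt{True}, a fork returns \texttt{True}, and the querier additionally forfeits its $d$-token stake, for a payoff $b-\phi-d\,p_{i+1}<b-\phi$ (and even in the degenerate regime excluded by economic soundness, where $\gamma$ would go undisputed, the querier loses the true-outcome benefit and does strictly worse still). Hence at equilibrium the querier submits \texttt{True}, it is not disputed, and the oracle returns \texttt{True} without ever calling $\mathpzc{F}$ -- the three bullets. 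Uniqueness of the induced behavior follows because every comparison in the backward induction is strict; the only indeterminacy is \emph{which} reporter disputes an off-path false outcome, and that does not affect the outcome.

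The step I expect to be the main obstacle is the false-tentative subgame: converting the aggregate bound supplied by the economic soundness condition into the per-reporter statement that some reporter \emph{strictly} prefers to dispute toward \texttt{True}, and getting the token accounting through $Distribute$/$ChoiceByFork$ exactly right -- in particular the asymmetry between the $2d$-token dispute stake and the $d$-token initial stake (which is what makes the disputer net a gain of $d$ rather than a loss), and the fact that an undisputed query leaves the reporting pool untouched (so that a false undisputed outcome carries no within-stage penalty for anyone, which is precisely why the dispute incentive, rather than a fork penalty, must be what rules it out). The \texttt{True}-tentative branch and the querier's decision are comparatively mechanical once the leaf payoffs are in hand.
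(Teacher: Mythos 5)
Your overall route --- backward induction on the querier-then-reporter sequential game, with terminal payoffs read off from $Distribute$/$ChoiceByFork$ under the hypothesis that any fork resolves to \texttt{True} --- is the same as the paper's (the paper presents the game tree in Figures \ref{fig:single_dispute_round_game_extensive_form} and \ref{fig:single_dispute_round_game_extensive_form_solved}, with the remaining reporters collapsed into chance moves, and solves it by backward induction). Your \texttt{True}-tentative branch, the stake accounting (the disputer of a false outcome posts $2d$ and recovers the $3d$ pot, netting $d$, while the querier forfeits $d$), and the fold-back to the querier's move all match.

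The gap is exactly in the step you flagged: the false-tentative subgame. You enrich the game by giving each reporter an extraneous benefit $I_{i,j}$ from letting the false outcome $\gamma$ stand, and then claim the economic soundness condition yields a reporter for whom the dispute reward (worth roughly $d\,p_{i+1}$) exceeds $I_{i,j}$. It does not. The condition bounds the aggregate $I_i=\sum_j I_{i,j}$ by $\tfrac{1}{2}(p_{i+1}-p_{i+1}^{\prime})\lvert T_i\rvert$, i.e.\ by half the reporting pool's market cap; it says nothing about the initial stake $d$, which can be arbitrarily small relative to the pool. With small $d$ it is entirely consistent with economic soundness that \emph{every} reporter has $I_{i,j}>d\,p_{i+1}$, in which case ``nobody disputes $\gamma$'' survives your backward induction and the second and third bullets (hence uniqueness) fail in your enriched game. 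Note also that the theorem does not hypothesize economic soundness at all; it is invoked only afterwards, to justify the premise that forks resolve honestly. The paper sidesteps your difficulty by not placing reporter-side extraneous incentives in the dispute-round game: in its model the benefit $b$ from a false outcome accrues to the querier, and the representative reporter's choice is purely the stake economics, so disputing a false tentative outcome strictly dominates. The scenario you are reaching for --- reporters motivated or paid not to dispute --- is handled separately in the cooperative-model section, where the operative requirement is the token-distribution condition $\frac{I}{a2d}<n$, not the economic soundness condition; that is the hypothesis your version of the argument would actually need.
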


\begin{proof}
The strategic decisions facing the querier and an arbitrary reporter during the single dispute round game are illustrated in Figure \ref{fig:single_dispute_round_game_extensive_form}. Reducing the chance moves to their expected values results in a simplified game tree shown in Figure \ref{fig:single_dispute_round_game_extensive_form_solved}, which is solved by backward induction to see the unique subgame perfect equilibrium (also shown in Figure \ref{fig:single_dispute_round_game_extensive_form_solved}), which results in our three desired behaviors.
\end{proof}

\begin{figure*}
\centering
\includegraphics[width=5in]{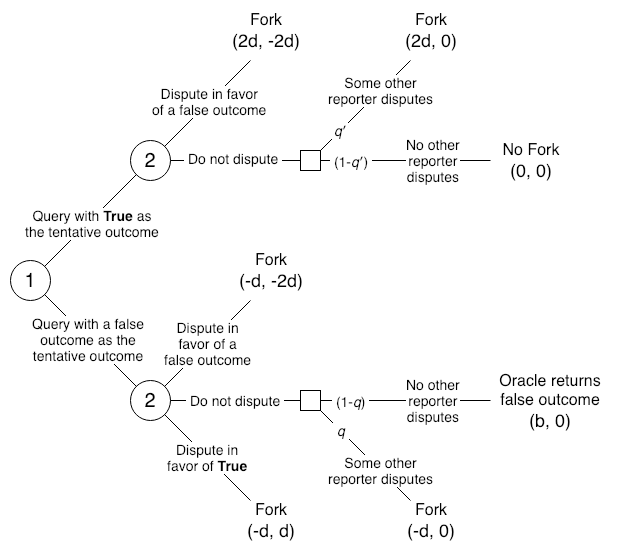}
\caption{The single dispute round game in extensive form. The value $\textbf{d}$ is the amount of stake on the initial tentative outcome. The value $\textbf{b}$ is the benefit to player 1 if the oracle returns a false outcome. Player 2 is an arbitrary single reporter, and the effects of the decisions made by the remaining reporters are modeled as chance moves.}
\label{fig:single_dispute_round_game_extensive_form}
\end{figure*}

\begin{figure*}
\centering
\includegraphics[width=3.5in]{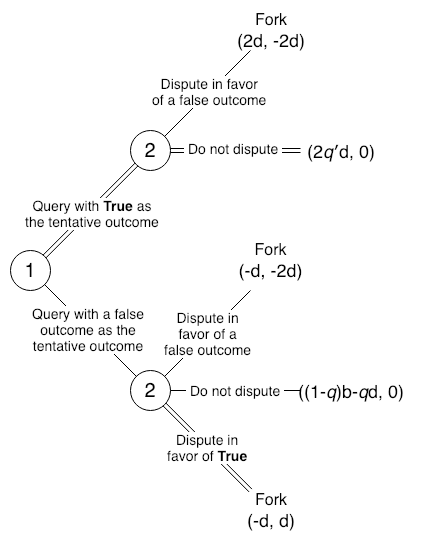}
\caption{The single dispute round game from Figure \ref{fig:single_dispute_round_game_extensive_form} with chance moves replaced with expected payouts. The unique subgame-perfect equilibrium is derived via backward induction and is shown with doubled edges.}
\label{fig:single_dispute_round_game_extensive_form_solved}
\end{figure*}

By Theorem \ref{thm:single-dispute-round-incentive-compatibility}, our desired player behavior is incentive compatible when honest play occurs during a fork. By Theorem \ref{th:if_sound_then_ic}, honest play during a fork is incentive compatible when the economic soundness condition is satisfied. It follows that, when the economic soundness condition is satisfied, the oracle $\mathpzc{A}_{1}$ is an incentive compatible implementation of our desired truth-telling function that does not need to call the fork on every oracle query.

Moreover, one can quickly verify that $\mathpzc{A}_{1}$ is individually rational by observing that the payouts for all players are non-negative when playing according to the unique subgame-perfect equilibrium shown in Figure \ref{fig:single_dispute_round_game_extensive_form_solved}.

\subsection{Weakness}
While $\mathpzc{A}_{1}$ does satisfy all of our stated design goals, it has the unfortunate property that an attacker can grief honest participants by intentionally causing many forks. While such behavior would be irrational in the context of our simple model, it may well be incentivized by extraneous circumstances when interacting with $\mathpzc{A}_{1}$ in the real world. (For example, a competing oracle platform may be financially motivated to damage the scalability properties of $\mathpzc{A}_{1}$ by intentionally causing many forks -- perhaps in the hopes of steering new users to their own platform.) We address this weakness in the following section.

\section{An Oracle with Multiple Dispute Rounds, $\mathpzc{A}_{2}$}
\subsection{Motivation}
Note that the minimum cost of causing a fork during a call of $\mathpzc{A}_{1}$ is exactly the same as the cost of querying $\mathpzc{A}_{1}$ with a false outcome. An attacker who desires to cause a fork (so as to negatively impact scalability) can simply query the oracle with a false tentative outcome. This outcome will be disputed, causing a fork, and costing the attacker $d$ tokens per query.

If $d$ is small, then it is cheap for honest users to query the oracle, but it is also cheap for an attacker to cause many forks. If $d$ is large, then causing many forks is expensive, but the capital needed to query the oracle in the first place may be prohibitive for honest users. Oracle implementers may want to keep the capital required to query the oracle small, while also increasing the cost of causing a fork.

There may be many ways to achieve this property.\footnote{A naive approach is to require that the querier submit their query with some small amount $d$ staked on the initial tentative outcome, but then require, say, $10000d$ stake in order to dispute the tentative outcome. While this may technically work in our simple, abstract model with no consideration of external investment opportunities, it is unlikely to work in practice. In particular, it is unlikely that a real-life honest participant would lock up capital to dispute a false tentative outcome for an ROI of just $0.01\%$. By contrast, our proposed approach guarantees disputers of false outcomes (in favor of \texttt{True} outcomes) an ROI of $40$ - $50\%$. We think this approach is likely to induce our desired player behavior in practice.} Here we will focus on just one: using multiple consecutive dispute rounds.

\subsection{Construction}
For this new oracle we introduce a \textit{dispute sequence}, which is simply a finite sequence of dispute rounds. As before, the query must be accompanied by a tentative outcome, $\hat{\omega}_{1} \in \Omega$, and some initial stake of $d$ tokens. Then the dispute sequence begins.

During the first dispute round in the dispute sequence, token holders have an opportunity to dispute the current tentative outcome by staking $2d$ tokens on any outcome \textit{other than} the current tentative outcome. If nobody disputes $\hat{\omega}_{1}$, then the $d$ tokens are returned to the querier, the dispute sequence ends, and the oracle outputs $\hat{\omega}_{1}$.

However, if $\hat{\omega}_{1}$ \textit{is} disputed, then a new dispute round begins, with the newly championed outcome, $\hat{\omega}_{2}$, as the tentative outcome for the new dispute round. Once again, all token holders have the opportunity to dispute the new tentative outcome, $\hat{\omega}_{2}$, in favor of any other outcome -- but this time they are required to stake $4d$ tokens on the newly championed outcome.

This continues either until some tentative outcome survives a dispute round without being disputed, or until the dispute stake posted by a disputer reaches some threshold $M$ (a constant chosen by the implementers of the oracle). If the former occurs, the oracle returns the tentative outcome without calling the fork. If the latter, then the oracle calls the fork, just as was done for $\mathpzc{A}_{1}$.

In general, the $n$th dispute round has tentative outcome $\hat{\omega}_{n}$, and token holders can dispute $\hat{\omega}_{n}$ in favor of any other outcome by staking $2^{n-1} d$ tokens on the newly championed outcome. If no such dispute occurs then the dispute sequence ends, the oracle will output $\hat{\omega}_{n}$, and all token holders who have staked on $\hat{\omega}_{n}$ (during any dispute round) will receive their stake back, in addition to receiving a pro rata share of all stake that was staked on outcomes other than $\hat{\omega}_{n}$ during any of the dispute rounds. If a dispute \textit{does} occur, and the dispute stake was below the threshold (that is, if $2^{n}d < M$), then another dispute round begins. And finally, if a dispute does occur, and if the dispute stake has met the threshold (that is, if $2^{n}d \ge M$), then the oracle resolves via the fork (just as with $\mathpzc{A}_{1}$), and those who staked on the winning outcome receive a pro rata share of all the stake that was staked on losing outcomes. The dispute sequence is formalized with the following definition.

\medskip
\noindent \textbf{Definition} (\textit{DisputeSequence}).
The algorithm $DisputeSequence$ accepts as input $(\mathtt{E},\Omega,\hat{\omega},\textbf{D})$, where $\mathtt{E}$ is an event, $\Omega$ is an outcomes space of $\mathtt{E}$, $\hat{\omega} \in \Omega$ is a tentative outcome, and $\textbf{D}=(D_{\omega})_{\omega \in \Omega \cup \left\{\texttt{Abstain} \right\} }$ is an $\Omega$-partition of the set of tokens that have been staked on some outcome $\omega \in \Omega$ during the present oracle query.

The algorithm runs a sequence of dispute rounds, terminating either when a dispute round completes without the tentative outcome being disputed, or until a dispute occurs for which the dispute stake is at least $M$. The algorithm returns $(\omega,\textbf{D},\texttt{FALSE}, \texttt{FALSE})$ if no dispute occurred in any dispute round. It returns $(\omega,\textbf{D},\texttt{TRUE}, \texttt{FALSE})$ if at least one dispute occurred, but no dispute occurred with dispute stake at least $M$. Finally, it returns $(\omega,\textbf{D},\texttt{TRUE}, \texttt{TRUE})$ if there was a dispute with dispute stake at least $M$. In pseudocode:

\begin{lstlisting}[language=json,escapeinside={*}{*}]
def *$DisputeSequence(\mathtt{E},\Omega,\hat{\omega}_{1},\textbf{D})$*:
  *$n \leftarrow 1 $*
  *$d \leftarrow |D_{\hat{\omega}}|$*
  *$\texttt{EVERDISPUTED} \leftarrow \texttt{FALSE}$*
  while *$2^{n-1}d < M$*
    //have a dispute round
    *$ (\hat{\omega}_{n+1},\textbf{D},\texttt{DISPUTED}) \leftarrow DisputeRound(\mathtt{E}, \Omega, \hat{\omega}_{n},\textbf{D},2^{n}d)$*
  
    //if there was no dispute
    if *$\texttt{DISPUTED} == \texttt{FALSE}$*:    
      return *$(\hat{\omega}_{n+1},\textbf{D}, \texttt{EVERDISPUTED}, \texttt{FALSE})$*
      
    else: // there was a dispute
      *$\texttt{EVERDISPUTED} \leftarrow \texttt{TRUE}$*
    endif
    
    *$n \leftarrow n+1$*
  endwhile
  
  return *$(\hat{\omega}_{n},\textbf{D},\texttt{TRUE}, \texttt{TRUE})$*
enddef
\end{lstlisting}
\medskip

Next we define the algorithm $BurnAndDistribute$, which behaves the same as $Distribute$ with the exception that it burns some of the dispute stake before distributing the rest, pro rata, to those that disputed in favor of the chosen outcome.

\medskip
\noindent \textbf{Definition} (\textit{BurnAndDistribute}).
The algorithm $BurnAndDistribute$ is called in the context of a dispute sequence. It accepts as input a tuple, $(\delta, \textbf{D}, \hat{\omega})$, where $\delta$ is a positive number (chosen by the oracle designer), $\textbf{D}$ is an $\Omega$-partition of dispute stake where $0 < \delta < |\bigcup \textbf{D}|$, and $\hat{\omega}$ is an outcome. The algorithm burns (by sending to a provably unspendable address) $\delta$ tokens from $\bigcup \textbf{D}$ and distributes the remaining tokens to the reporters, proportional to the number of tokens they staked in favor of $\hat{\omega}$.
\medskip

Next, we define a small variation of $ChoiceByFork$ that burns some amount of tokens before distributing the rest to token owners. Without such a burn, an attacker with access to a large amount of capital could cause forks without cost, by iteratively disputing every tentative outcome until a fork is initiated. (As long as the final tentative outcome is \texttt{True} the attacker would recoup -- as a reward for staking on \texttt{True} -- all the stake they lost by staking on false outcomes.)

\medskip
\noindent \textbf{Definition} (\textit{ChoiceByFork${}^{\prime}$}).
The algorithm $ChoiceByFork^{\prime}$ is a variation of $ChoiceByFork$ which simply calls $BurnAndDistribute(|\bigcup \textbf{D}| - \frac{7}{5}|D_{\hat{\omega}}|, \textbf{D},\hat{\omega})$ instead of $Distribute(\textbf{D}, \hat{\omega})$. Its purpose is to guarantee an ROI of 40\% to all token holders who disputed an outcome in favor of $\hat{\omega}$. (Note that, while we use a 40\% ROI in this paper, our results hold for any ROI strictly greater than 0\% and strictly less than 50\%.)
\medskip

With these definitions in place, we are ready to describe $\mathpzc{A}_{2}$, which behaves exactly the same as $\mathpzc{A}_{1}$ with the exception that $\mathpzc{A}_{2}$ uses $DisputeSequence$, $ChoiceByFork^{\prime}$, and $BurnAndDistribute$ in lieu of $DisputeRound$, $ChoiceByFork$, and $Distribute$. In pseudocode:

\begin{lstlisting}[language=json,escapeinside={*}{*}]
//initial state
*$C_{\texttt{True}}^{0} \leftarrow T_{genesis}$*
*$i \leftarrow 0$*
  
def *$\mathpzc{A}_{2}(\mathtt{E},\Omega,\phi,\hat{\omega},\{t_{1},\ldots,t_{d} \})$*:
  //increment query counter
  *$i \leftarrow i+1$*
  
  //update the reporting pool
  //only truth-tellers from previous query remain in pool
  *$T_{i} \leftarrow C_{\texttt{True}}^{i-1}$*
  
  //pay owners of tokens in *$T_{i}$*
  *$Pay(T_{i},\phi)$*
  
  //init *$\Omega$*-partition of dispute stake
  for *$\omega \in \Omega \cup \{\texttt{Abstain}\}$*:
    if *$\omega == \hat{\omega}$*:
      *$D_{\omega} \leftarrow \{t_{1},\ldots,t_{d} \}$*
    else:
      *$D_{\omega} \leftarrow \emptyset$*
    endif
    *$\textbf{D} \leftarrow \{ D_{\texttt{Abstain}}, D_{\omega_{1}}, \ldots, D_{\omega_{|\Omega|}} \}$*
  endfor
  
  //run the dispute sequence
  *$ (\hat{\omega}_{i},\textbf{D},\texttt{EVERDISPUTED}, \texttt{BIGDISPUTE}) \leftarrow DisputeSequence(\mathtt{E}, \Omega, \hat{\omega},\textbf{D})$*
  
  //if there was a large enough dispute  
  if *$\texttt{BIGDISPUTE} == \texttt{TRUE}$*:
    //resort to fork
    *$(\hat{\omega}_{i}, \{ C_{\texttt{Abstain}}^{i}, C_{\omega_{1}}^{i}, \ldots, C_{\omega_{|\Omega|}}^{i} \}) \leftarrow ChoiceByFork^{\prime}(\mathtt{E},\Omega, T_{i}, \textbf{D})$*
  
    //return outcome that won the fork
    return *$\hat{\omega}_{i}$*
  endif
  
  //if there was no dispute
  if *$\texttt{EVERDISPUTED} == \texttt{FALSE}$*:
    // Give the initial stake back to the querier
    *$Distribute(\textbf{D},\hat{\omega})$*
    
  else: //there was a dispute but not with dispute stake greater than *$M$*
    //burn some of the dispute stake*\footnote{Here, we are burning just enough dispute stake so that the ROI for those who disputed in favor of the \texttt{True} outcome is 40\%.}* and give the rest to the winners.
    *$BurnAndDistribute(|\bigcup\textbf{D}|-\frac{7}{5}|D_{\hat{\omega}_{i}}|,\textbf{D},\hat{\omega}_{i})$*
  endif
  
  //no reporting tokens are removed from the reporting pool
  *$C_{\texttt{True}}^{i}\leftarrow T_{i}$*
        
  //return tentative outcome
  return *$\hat{\omega}_{i}$*
enddef
\end{lstlisting}

\section{Analysis of $\mathpzc{A}_{2}$}

Observe that the only difference between $\mathpzc{A}_{1}$ and $\mathpzc{A}_{2}$ is that the latter runs $DisputeSequence$, $ChoiceByFork^{\prime}$, and $BurnAndDistribute$ whereas the former runs $DisputeRound$, $ChoiceByFork$, and $Distribute$. As before, we want to show that if the economic soundness condition is satisfied then, at equilibrium, the oracle $\mathpzc{A}_{2}$ is expected to return the \texttt{True} outcome of an oracle query \textit{without having to invoking a fork}. In particular, we want to show that, if honest play is expected during a fork, then there exists a unique subgame-perfect equilibrium in the subgame induced by the dispute sequence for which:

\begin{itemize}
\item The querier submits their query with the \texttt{True} outcome as the tentative outcome.
\item  When the tentative outcome for any dispute round is the \text{True} outcome, it is not disputed.
\item When a tentative outcome for any given dispute round is false, it is disputed in favor of the \texttt{True} outcome.
\end{itemize}

As we will show in the following theorem, when honest play is expected during forks, and when it is common knowledge that there exists at least one token holder who does not dispute a true tentative outcome in favor a false outcome, then there exists a unique subgame perfect equilibrium in the dispute sequence game that satisfies these three properties.

\begin{theorem}\label{thm:dispute-sequence-incentive-compatibility}
If honest play is expected during a fork, and if it is common knowledge among token holders that there exists at least one token holder who has not disputed in favor of a false outcome, and if the return for disputing false outcomes is chosen so that $0 < a < \frac{1}{2}$, then there exists a unique subgame perfect equilibrium in the dispute sequence game. The unique subgame perfect equilibrium results in the following behavior:
\begin{itemize}
\item The querier submits their query with the \texttt{True} outcome as the tentative outcome
\item If, during any dispute round, the \texttt{True} outcome is the tentative outcome, then no reporter is expected to dispute.
\item If, during any dispute round, a false outcome is the tentative outcome, then there exists a reporter whom we expect to dispute in favor of the \texttt{True} outcome.
\end{itemize}
\end{theorem}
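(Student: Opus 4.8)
The plan is to generalize the backward-induction argument behind Theorem~\ref{thm:single-dispute-round-incentive-compatibility} to the whole dispute \emph{sequence}. First I would record that the sequence terminates: the required dispute stake grows geometrically (as $2^{n-1}d$, resp.\ $2^n d$ in the loop), so there is a largest index $N$ with $2^{N-1}d < M$; hence at most $N$ dispute rounds occur, and a dispute in round $N$ necessarily triggers a fork. The subgame induced by the dispute sequence is therefore a finite sequential game and possesses a subgame-perfect equilibrium locatable by backward induction. As in the single-round analysis I would model the reporters collectively (a distinguished reporter as a player, the rest as chance moves) and use the two payoff facts: (i)~since honest play is expected during any fork, a fork returns \texttt{True} and, via $ChoiceByFork^{\prime}$, pays a disputer of \texttt{True} a fraction $1+a$ of their stake while paying $0$ to stakers of losing outcomes; and (ii)~terminating a round without a dispute pays the stakers of the surviving outcome $(1+a)$ times their stake via $BurnAndDistribute$ (or simply refunds the querier in the never-disputed case).

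The heart of the proof is a downward induction on the round index establishing the invariant: \emph{in any round, if the tentative outcome is \texttt{True} then no token holder disputes, and if it is false then some token holder disputes in favor of \texttt{True}}. Base case, round $N$: a dispute forces a fork returning \texttt{True}, so if the tentative outcome is \texttt{True} any dispute (necessarily toward a false outcome) forfeits the disputer's entire stake and no one disputes, whereas if it is false, disputing toward \texttt{True} returns $(1+a)$ times the stake, which strictly beats not disputing ($0$) and disputing toward another false outcome (forfeiture), so some token holder does it. Inductive step at round $n<N$: using the invariant for rounds $n+1,\dots,N$, whenever a false outcome becomes the tentative outcome \texttt{True} is eventually restored and wins, so the same strict payoff comparisons show a true tentative outcome is never disputed and a false one is always disputed toward \texttt{True}. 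This is exactly where the hypothesis that it is common knowledge that some token holder has never disputed in favor of a false outcome is used: it guarantees the truth-restoring dispute is always available and, being common knowledge, lets every token holder rely on \texttt{True} ultimately prevailing — which is what makes disputing a true tentative outcome a \emph{known} loss and pins down a unique continuation in every subgame. The condition $0<a<\tfrac12$ enters because each dispute doubles the stake on the new tentative outcome, so at least half of a successful disputer's stake is covered by the stake stranded on the immediately preceding tentative outcome; this makes the $(1+a)$ payout feasible with a strictly positive remainder to burn, which is what prevents an attacker from recouping losses by chaining disputes.

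Finally I would resolve the querier's opening move. Under the invariant, submitting the \texttt{True} outcome leads to no disputes, the oracle returns \texttt{True}, and the querier's $d$-token stake is refunded; submitting any false outcome leads (by the invariant at round $1$) to it being disputed away, the oracle again returning \texttt{True}, but now the querier forfeits her $d$-token stake. Since the oracle's output — hence the querier's extraneous benefit — is identical in both cases, the $d$ tokens break the tie strictly toward submitting \texttt{True}. Combining this with the round invariant yields a subgame-perfect profile exhibiting the three stated behaviors, and since every relevant comparison above is strict, the equilibrium is unique.

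The main obstacle is the inductive step for a \texttt{True} tentative outcome: arguing that disputing it is irrational requires knowing truth will be restored further down the sequence, and pinning that down — rather than leaving open an equilibrium in which a false outcome ``sticks'' because nobody expects it to be disputed — is precisely what forces the common-knowledge ``honest token holder'' hypothesis into the statement. A secondary nuisance is carefully tracking the accumulated $\Omega$-partition $\textbf{D}$ across rounds to verify that every invocation of $BurnAndDistribute$ is well-defined (positive burn, sufficient losing stake) given $0<a<\tfrac12$.
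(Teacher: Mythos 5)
Your overall skeleton --- finiteness of the dispute sequence via the threshold $M$, backward induction from the last pre-fork round, a two-case analysis in each round, strictness giving uniqueness, and the querier's opening move resolved last --- matches the paper's proof. The gap is in your per-reporter payoff model, and the simplification lands exactly where the theorem's two odd-looking hypotheses do their work. In the paper each reporter carries accumulated dispute stake $T_n$ on \texttt{True} and $F_n$ on the false outcome into round $n$; a reporter with large $F_n$ is paid $(1+a)F_n$ if the false tentative outcome survives, so your base-case claim ``disputing toward \texttt{True} strictly beats not disputing, which pays $0$'' fails for such a reporter: not disputing pays $aF_m - T_m$ if nobody else disputes, which can exceed the dispute payoff. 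The common-knowledge hypothesis is not there to guarantee that a truth-restoring dispute is ``available''; it supplies a reporter with $F_m=0$ for whom disputing a false tentative is dominant \emph{regardless} of others' behavior, and common knowledge of that reporter's existence forces every other reporter's belief about the probability of a dispute to $q=1$, after which disputing is best even for reporters holding the maximum false stake. Under your payoff model (non-disputers get $0$) the hypothesis would be superfluous, which signals that the model has dropped precisely the feature the hypothesis exists to handle.

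A second, smaller imprecision: in the inductive step with a \texttt{True} tentative outcome, the comparison is not ``the same'' as the base case, because disputing \texttt{True} at round $k-1$ does not trigger a fork --- it opens round $k$, in which the same reporter can re-champion \texttt{True} with a $2^{k}d$ bond and collect ROI $a$ on it. The required inequality is $-2^{k-1}d + a2^{k}d < 0$, i.e.\ $a<\tfrac{1}{2}$; this is where the hypothesis $0<a<\tfrac{1}{2}$ enters the incentive argument, not primarily as a feasibility condition on the burn in $BurnAndDistribute$. You do remark that $a<\tfrac{1}{2}$ blocks recouping losses by chaining disputes, so the idea is present, but as written your induction step needs this comparison (and the $T_n$, $F_n$ bookkeeping above) made explicit to be valid.
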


\begin{proof}
See appendix.
\end{proof}

In addition to being incentive compatible and individually rational, this oracle can be made expensive to grief while remaining cheap to query. This is achieved via a judicious choice of the parameters $d$, $M$, and the proportion of tokens burned after each dispute. It is important to note, however, that as the quantity $M-d$ increases, so do the number of dispute rounds required to initiate a fork. Thus the maximum amount of time it can take for the oracle to respond to a query increases as the gap between $d$ and $M$ increases. Implementers of this oracle ought to keep this point in mind when choosing parameter values.

\section{Incentive Compatibility in the Cooperative Model}\label{section:cooperative_model}
Our main results show desirable properties in the non-cooperative model, but it is important to verify that these properties hold in the cooperative model, where players can make binding agreements -- after all, what is a smart contract if not a binding agreement? That is, we must verify that our results hold when users are able to collude and form coalitions. (This covers the case where the querier \textit{is} a reporter, because the utility of a ``querier-reporter" is the sum of the utilities of the querier role and the reporter role, and so a single ``querier-reporter" is equivalent, with respect to utility, to a coalition consisting of the querier and a separate reporter.) As we will see, the forking mechanism is trivially secure against collusion when the economic soundness condition is satisfied, and dispute rounds (and dispute sequences) are secure against collusion if the tokens in $T$ are sufficiently distributed.

\subsection{Coalitions in $\mathpzc{A}_{0}$}

\begin{figure*}
\centering
\includegraphics[width=5in]{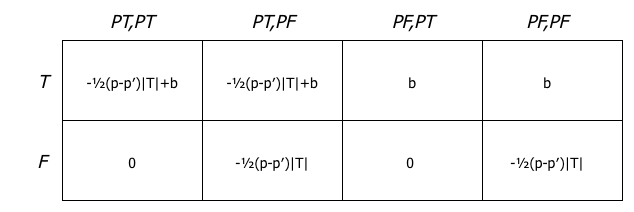}
\caption{The maximum total payout to a coalition that is both able to choose the outcome of an ($\mathpzc{A}_{0}$) oracle query and also includes the querier as a member. If the reporters cause the oracle to lie, their benefit is the querier's loss, so it is zero-sum: the benefits and losses to the coalition exactly cancel out.}
\label{fig:max_coalition_payouts}
\end{figure*}

First consider the simple oracle $\mathpzc{A}_{0}$ for which all oracle outputs are determined via the fork. Recall from section \ref{section:analysis_of_a_0} that the only strategy that is individually rational for the querier is to choose $PF,PF$, and when the querier chooses this strategy, any coalition of reporters large enough to unilaterally determine the output of the oracle has a unique best response: make the oracle return \texttt{True}. It follows that any coalition that is powerful enough to determine the outcome of the oracle query but which does not include the querier as a member does strictly better by making the oracle return the \texttt{True} outcome.

Next, consider a coalition that is both powerful enough to determine the outcome of the oracle and also includes the querier as a member. Such a coalition is able to unilaterally decide among the 8 outcomes in Figure \ref{fig:querier_decision_stage_game_normal_form}, and the payout to such a coalition is at most the sum of the payouts for the reporters and the querier for their chosen outcome. The maximum payouts for such a coalition are shown in Figure \ref{fig:max_coalition_payouts}. Note that the maximum payouts (assuming $b>0$) occur when the oracle returns the \texttt{True} outcome. Thus, any coalition that is powerful enough to decide the outcome of the oracle query does best when the oracle returns the \texttt{True} outcome.

When the economic soundness condition is satisfied, the analysis is even more straightforward. The economic soundness condition is satisfied exactly when the minimum cost of making the oracle return a false outcome is greater than the maximum total collective benefit for doing so. (Indeed, this is the entire motivation behind its definition.) If the economic soundness condition is satisfied, then the cost to any coalition that makes the oracle lie is greater than the maximum benefit that coalition could receive, and so by the pigeonhole principle the payout would not be individually rational for at least one member of the coalition. Therefore, any imputation must necessarily result in the oracle returning \texttt{True}.

\subsection{Coalitions in $\mathpzc{A}_{1}$ and $\mathpzc{A}_{2}$}

Next, consider oracles $\mathpzc{A}_{1}$ and $\mathpzc{A}_{2}$. For these oracles, there are two ways to get a false response to a query: either via a fork or by having a false tentative outcome go undisputed during a dispute round. For the reasons outlined in the previous section, we expect any coalition deciding the outcome of the oracle \textit{via a fork} to choose a behavior that results in the oracle returning the \texttt{True} outcome. So, for this section, we turn our attention to coalitions that may cause the oracle to return a false outcome by causing a false tentative outcome to go undisputed during a dispute round. We will first consider coalitions in $\mathpzc{A}_{1}$, and then show that the strategic situation for coalitions in $\mathpzc{A}_{2}$ can be reduced to those in $\mathpzc{A}_{1}$. We will assume, for this entire section, that the economic soundness conditions is satisfied.

For $\mathpzc{A}_{1}$, recall that during the dispute round (in the non-cooperative model) every token holder that controls at least $2d$ tokens in $T$ has the opportunity to dispute the tentative outcome. If they dispute the tentative outcome then they receive a benefit of $a2d$, and if they do not dispute the outcome they receive nothing.\footnote{In $\mathpzc{A}_{1}$ the value $a$ is $0.5$, while in $\mathpzc{A}_{2}$ the value $a$ is $0.4$.} In order for the oracle to return a false outcome to a valid oracle query without a fork, the querier must submit their query with a false tentative outcome, and all token holders (either individuals or mutually disjoint coalitions) must choose \textit{not} to dispute the false tentative outcome. In order for such behavior to be incentive compatible, each token holder would have to receive at least $a2d$ when choosing \textit{not} to dispute the false tentative outcome.

Therefore, any coalition payout that would result in the oracle $\mathpzc{A}_{1}$ returning a false outcome to a valid query without a fork would necessarily payout out at least $a2dn$, where $n$ is the number of players (either individuals or mutually disjoint coalitions) that control at least $2d$ tokens in $T$. In other words, the minimum cost of making $\mathpzc{A}_{1}$ return a false outcome without causing a fork is $a2dn$, because this is the total cost of bribing all other token holders to \textit{not} dispute the false tentative outcome.

As before, let $I$ denote the maximum (gross) benefit a coalition could receive by having the oracle return a false outcome to a valid oracle query. Then we can make the following simple observation relating the distribution of tokens in $T$ to the incentive compatibility of $\mathpzc{A}_{1}$ in the cooperative game-theoretic model: If $\frac{I}{a2d} < n$ then there does not exist any imputation for any coalition that results in the oracle returning a false outcome to a valid oracle query without calling the fork.

Colloquially, if the tokens in $T$ are sufficiently distributed, then the cost of paying all necessary token holders to \textit{not} exercise their opportunity to dispute the false tentative outcome is greater than the maximum benefit of doing so.

Finally, consider the oracle $\mathpzc{A}_{2}$. In this oracle there are several dispute rounds, each of which provides an opportunity for the oracle to return a false outcome without resorting to a fork. Analogous to the analysis of $\mathpzc{A}_{1}$, the $k$th dispute round requires a bond size of $2^{k} d$. Every player with at least $2^{k} d$ tokens in $T$ would have the opportunity to dispute a false tentative outcome during the $k$th dispute round. They would stand to gain $a2^{k} d$ for doing so, and would receive no benefit if they chose not to dispute. Thus, any individually rational payout for any coalition that would result in the oracle returning a false outcome in the $k$th dispute round (without calling a fork) would cost the coalition at least $a2^{k} dn$. Since such a coalition would receive a (gross) benefit at most $I$ for making the oracle return false, we make the following observation: If $\frac{I}{a2^{k} d} < n$ then there does not exist any imputation for any coalition that results in the oracle returning a false outcome to a valid oracle query without calling the fork.\footnote{Notice that, as $k$ grows, the token distribution burden is lessened. This is because the cost of bribing other players to \textit{not} exercise their ability to dispute a false tentative outcome grows exponentially with $k$.} This is maximized during the first dispute round, when the token distribution requirements for $\mathpzc{A}_{2}$ are identical to those in $\mathpzc{A}_{1}$.

In conclusion, the oracles in this paper can be expected to behave as intended in the cooperative model. For the oracle $\mathpzc{A}_{0}$ we need no further assumptions than those we made in the non-cooperative model. For oracles $\mathpzc{A}_{1}$ and $\mathpzc{A}_{2}$, incentive compatibility in the cooperative model requires that the tokens in the $T$ be sufficiently distributed.

\section{Conclusion}

We have introduced a new approach to decentralized oracle design -- one that is not based upon coordination games. We have presented three specific mechanisms which, under certain reasonable economic conditions, have been shown to be incentive compatible and individually rational in the non-cooperative model. Furthermore, we have shown that if the tokens in the reporting pool are sufficiently distributed, the mechanisms are also incentive compatible in the cooperative model.

\begin{acknowledgments}\label{section:acknowledgements}
The authors thank Micah Zoltu for his pioneering contributions to decentralized oracle design, and the Augur community for helpful feedback and discussions.  Financial support for this research was provided by Forecast Foundation OU.
\end{acknowledgments}

\bibliographystyle{unsrt}
\bibliography{main}

\appendix

\section{Calculations}\label{appendix:calculations}

\medskip
\noindent \textbf{Theorem \ref{th:if_sound_and_true_then_punish_false}.} \textit{If the economic soundness condition is satisfied then always choosing the move $PunishFalse$ is a best response by the querier to any strategy profile chosen by the reporters that results in the oracle returning $\texttt{True}$.}
\medskip

\begin{figure*}[ht]
\centering
\includegraphics[width=4in]{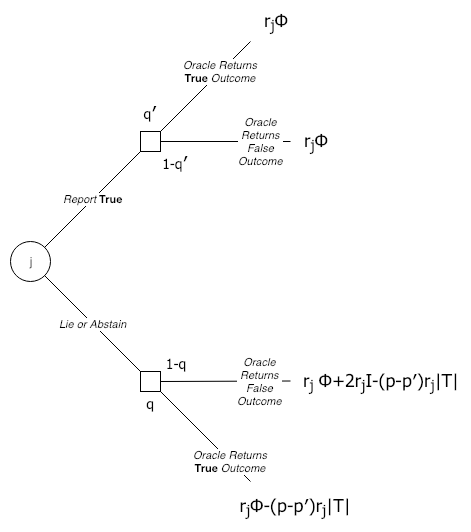}
\caption{A decision tree modeling the decision faced by an individual reporter in the stage game when the economic soundness condition is satisfied and the querier always chooses the move $PunishFalse$. The effects of the choices of the remaining reporters are modeled as chance moves. Figure \ref{fig:reporter_decision_stage_game_ev} shows a simplified version of this decision tree with the chance moves replaced by their expected values.}
\label{fig:reporter_decision_stage_game_with_chance}
\end{figure*}

\begin{figure}[b]
\centering
\includegraphics[width=4in]{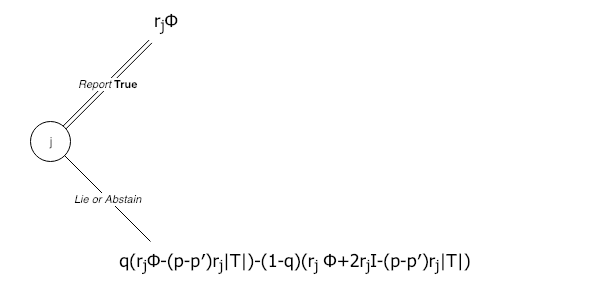}
\caption{A simplified version of the decision tree from Figure \ref{fig:reporter_decision_stage_game_with_chance}, with the chance moves replaced by their expected values. If the economic soundness condition is satisfied and the querier always chooses the move $PunishFalse$, then reporting the \texttt{True} outcome has a strictly better expected value than lying or abstaining.}
\label{fig:reporter_decision_stage_game_ev}
\end{figure}

\begin{figure*}[ht]
\centering
\includegraphics[width=5in]{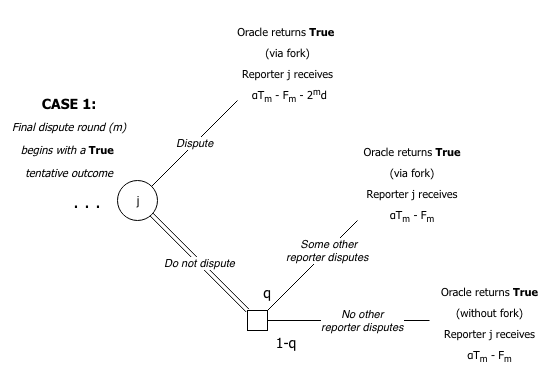}
\caption{A decision tree modeling the decision faced by an arbitrary reporter during a final dispute round in a dispute sequence in the case where the tentative outcome is \texttt{True}. The choice with the highest expected payouts is indicated by doubled lines.}
\label{fig:reporter_decision_final_dispute_round_case_1}
\end{figure*}

\begin{proof}
For the purposes of this proof, we will model the set of all reporters as a single player, referred to here as ``the reporter'', attempting to maximize its total payout. The reporter chooses whether to make the oracle return the \texttt{True} outcome or some false outcome. (In this way, we capture the set of all possible strategy profiles of \textit{actual} reporters and separate them into those that would cause the oracle to return the \texttt{True} outcome and those that would cause the oracle to return some false outcome; this simplifies our analysis significantly.) We assume that the reporter gets some benefit $I>0$ if the oracle returns a false outcome, and that this benefit comes at the expense of the querier. We further assume that the reporter will minimize their costs wherever possible, so that the cost of causing the oracle to lie (when the querier chooses to $PunishFalse$) is the minimum possible: $\frac{1}{2}(p-p^{\prime})|T|$. We make a similar conservative assumption for the cost of causing the oracle to return the $\texttt{True}$ outcome when the querier chooses to $PunishTrue$.\footnote{Observe that this assumption is extremely conservative. A coalition of reporters would have to know the querier's chosen strategy in advance in order to guarantee such low costs. For example suppose the reporter wanted to make the oracle return \texttt{True} while minimizing their cost of doing so. If the querier where going to choose $PunishTrue$, then the reporter would minimize costs by voting for \texttt{True} with only \textit{half} of the tokens in the reporting pool. If the querier where going to choose $PunishFalse$, then the reporter would minimize costs by voting for \texttt{True} with \textit{all} of the tokens in the reporting pool. Here we are being extremely conservative and are assuming that the reporter will always be able to minimize their costs no matter how the querier behaves.} We assume that the querier receives some benefit $b>0$ if and only if the oracle returns the \texttt{True} outcome. And finally, we note that the querier pays an oracle fee $\phi$ to the reporter no matter the final outcome.

Given these payouts, the resulting sequential game is show in extensive form in Figure \ref{fig:querier_decision_stage_game_extensive_form} and in normal form in Figure \ref{fig:querier_decision_stage_game_normal_form}.

When the economic soundness condition is satisfied, $I-\frac{1}{2}(p-p^{\prime})|T| < 0$, and so (as can be seen in Figures \ref{fig:querier_decision_stage_game_extensive_form} and \ref{fig:querier_decision_stage_game_normal_form}) the strategy profile $(True,(PunishFalse,PunishFalse))$ is a Pareto efficient, subgame-perfect Nash equilibrium. 

In other words, when the economic soundness condition is satisfied, always choosing the move $PunishFalse$ is a best response by the querier to any strategy profile of reporters that causes the oracle to return \texttt{True}.
\end{proof}

\medskip
\noindent \textbf{Theorem \ref{th:if_sound_and_punish_false_then_true}.} \textit{If the economic soundness condition is satisfied and the querier always chooses the move $PunishFalse$, then reporting the \texttt{True} outcome is always the best response by every individual reporter.}
\medskip

\begin{figure*}[ht]
\centering
\includegraphics[width=5in]{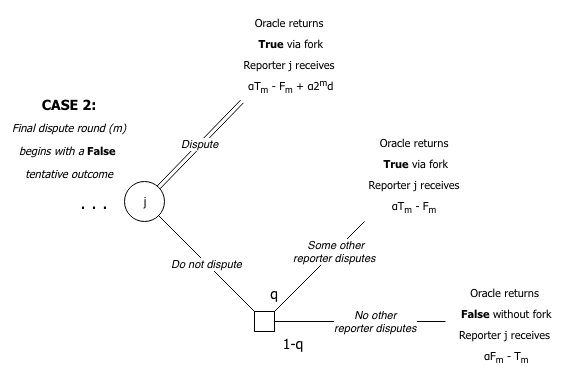}
\caption{A decision tree modeling the decision faced by an arbitrary reporter during a final dispute round in a dispute sequence in the case where the tentative outcome is false. By assumption, there exists at least one reporter for whom $F_m = 0$. For such a reporters, disputing is \textit{always} the dominant choice, regardless of the value of $q$. Indeed, since this fact is common knowledge among all reporters, all reporters know that $q=1$. Hence, in this case, disputing the false tentative outcome is the dominant choice for \textit{all} reporters, even if they hold large amounts of dispute stake on the false outcome.}
\label{fig:reporter_decision_final_dispute_round_case_2}
\end{figure*}

\begin{proof}
Suppose the economic soundness condition is satisfied and that the querier always chooses $PunishFalse$. Let $j$ be an arbitrary reporter. We will show that reporting \texttt{True} is the best response by $j$ no matter what choices are made by the remaining reporters.

Let $r_{j}$ denote the proportion of tokens in the reporting pool $T$ that are owned by $j$. Recall that each reporter always receives a pro rata share of the reporting fee $\phi$. In particular, reporter $j$ always receives $r_{j}\phi$.

Since the querier is always choosing the move $PunishFalse$, the reporter $j$ suffers a loss of $\frac{1}{2}(p-p^{\prime})r_{j}|T|$ if they lie or abstain during a fork.

We assume that if the oracle returns a false outcome then all reporters who lied or abstained will receive a pro rata share of $I$ (the total benefit of causing the oracle to lie). We make the conservative assumption that if the oracle is made to lie, then it was done at the minimum possible total cost to the set of lying reporters. (This is conservative because it maximizes the benefit to a lying reporter in the event that the oracle returns a false outcome.) That is, if the oracle returns a false outcome, then just $\frac{1}{2}|T|$ tokens were used to lie or abstain, and reporter $j$ will receive $2r_{j}I$ if and only if $j$ reported false or abstained during the fork.

The decision faced by reporter $j$ is modeled with the decision tree shown in Figure \ref{fig:reporter_decision_stage_game_with_chance} where the outcome of the oracle is modeled as a chance move. Replacing the chance moves with their expected values, we get the simplified decision tree in Figure \ref{fig:reporter_decision_stage_game_ev}, from which we can observe that the payoff to $j$ for reporting \texttt{True} is always $r_{j}\phi$ and the expected payoff for lying or abstaining is $q(r_{j}\phi-(p-p^{\prime})r_{j}|T|)+(1-q)(r_{j}\phi+2r_{j}I-(p-p^{\prime})r_{j}|T|)$. We need only show that the expected payoff for lying or abstaining is always strictly less than $r_{j}\phi$.

Because the economic soundness condition is satisfied, the quantities ${-(p-p^{\prime})r_{j}|T|}$ and $2r_{j}I-(p-p^{\prime})r_{j}|T|$ are both negative. It follows that the quantities $r_{j}\phi-(p-p^{\prime})r_{j}|T|$ and $r_{j}\phi+2r_{j}I-(p-p^{\prime})r_{j}|T|$ are both strictly less than $r_{j}\phi$. Thus the expected payoff for lying or abstaining is a convex combination of two values that are both strictly less than $r_{j}\phi$. Hence the expected payout for lying or abstaining is strictly less than $r_{j}\phi$.

Therefore, if the economic soundness condition is satisfied and the querier always chooses the move $PunishFalse$, then reporting the \texttt{True} outcome is always the best response by every individual reporter.
\end{proof}

\medskip
\noindent \textbf{Theorem \ref{thm:dispute-sequence-incentive-compatibility}.} \textit{If honest play is expected during a fork, and if it is common knowledge among token holders that there exists at least one token holder who has not disputed in favor of a false outcome, and if the return for disputing false outcomes is chosen so that $0 < a < \frac{1}{2}$, then there exists a unique subgame perfect equilibrium in the dispute sequence game. The unique subgame perfect equilibrium results in the following behavior:
\begin{itemize}
\item The querier submits their query with the \texttt{True} outcome as the tentative outcome
\item If, during any dispute round, the \texttt{True} outcome is the tentative outcome, then no reporter is expected to dispute.
\item If, during any dispute round, a false outcome is the tentative outcome, then there exists a reporter whom we expect to dispute in favor of the \text{True} outcome.
\end{itemize}}
\medskip

\begin{proof}
We will argue by backward induction, beginning with the final dispute round that would occur before a fork. We will show that in such a final dispute round, a \texttt{True} tentative outcome is expected to go undisputed, while a false tentative outcome is expected to be disputed in favor of the \texttt{True} outcome. This will form our base case.

Next, we will assume our induction hypothesis, which is that all dispute rounds after (and including) the $k$th dispute round are expected to have their tentative outcomes go undisputed if they are the \texttt{True} outcomes, and be disputed in favor of the \texttt{True} outcomes if they are false.

Then we will show that the induction hypothesis implies that we can expect the $(k-1)$th dispute round to have its tentative outcome go undisputed if it is the \texttt{True} outcome, and be disputed in favor of the \texttt{True} outcome if it is false. The conclusion is that we can always expect -- in every dispute round -- that the tentative outcome will go undisputed if it is the \texttt{True} outcome, and will be disputed in favor of the \texttt{True} outcome if it is false. An immediate consequence is that the querier is expected to submit their query with the \texttt{True} outcome as the initial tentative outcome.

\begin{figure*}[ht]
\centering
\includegraphics[width=5in]{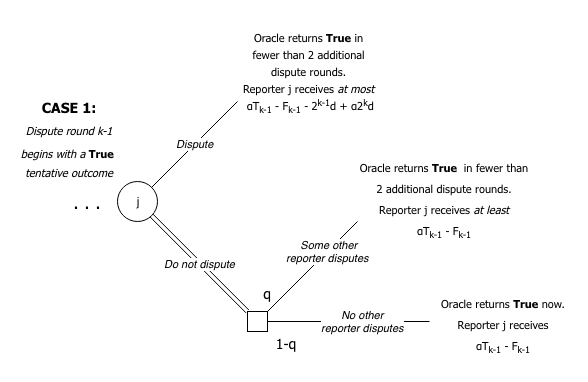}
\caption{A decision tree modeling the decision faced by an arbitrary reporter during the $(k-1)$th dispute round in the case where the tentative outcome is \texttt{True} and the induction hypothesis is assumed. The choice with the highest expected payouts is indicated by doubled lines.}
\label{fig:reporter_decision_k-1_dispute_round_case_1}
\end{figure*}

\textbf{Base Case:} Suppose the final dispute round in a dispute sequence occurs at round $m$. That is, if the tentative outcome of round $m$ is disputed, then $\mathpzc{A}_{2}$ will call the fork. There are two cases: either the tentative outcome at the beginning of the $m$th dispute round is the \texttt{True} outcome, or it is the false outcome.

\underline{Case 1:} \textit{The tentative outcome of the $m$\textup{th} dispute round is the \texttt{True} outcome.}

The strategic decision facing an arbitrary reporter, $j$, during the $m$th dispute round is shown in Figure \ref{fig:reporter_decision_final_dispute_round_case_1}, with $T_m$ denoting the total amount of dispute stake reporter $j$ has placed on \texttt{True} by the beginning of the $m$th dispute round, $F_m$ denoting the total amount of dispute stake reporter $j$ has placed on the false outcome by the beginning of the $m$th dispute round, $d$ denoting the amount of the querier's dispute stake, and $a$ denoting the ROI received by reporters for holding dispute stake on the outcome to which the oracle ultimately resolves. In our case, where each successive dispute round requires 2 times the dispute stake of the previous round and we burn $|\bigcup\textbf{D}|-\frac{7}{5}|D_{\hat{\omega}_{i}}|$ before distributing rewards, $a = 40\%$. (That is, if a reporter stakes $X$ tokens disputing in favor of \texttt{True}, and the oracle ultimately returns \texttt{True}, then the reporter will receive their original $X$ tokens back, in addition to $0.4X$ more.)

With this notation, disputing the tentative outcome requires a bond of size $2^m d$. Thus, if reporter $j$ disputes the \texttt{True} tentative outcome, the oracle will call the fork, which is expected to return \texttt{True}, which would result in reporter $j$ receiving a net payout of $aT_m - F_m - 2^m d$. Similarly, if the reporter does not dispute the tentative outcome, but someone else does, then the reporter can expect to receive a net payout of $aT_m - F_m$. Finally, if the reporter does not dispute the tentative outcome and nobody else does either, then the reporter will receive $aTm - Fm$.

Thus it is always best for a reporter to \textit{not} dispute the tentative outcome of the $m$th round if it is \texttt{True} -- no matter what other reporters choose to do.

\begin{figure*}[ht]
\centering
\includegraphics[width=5in]{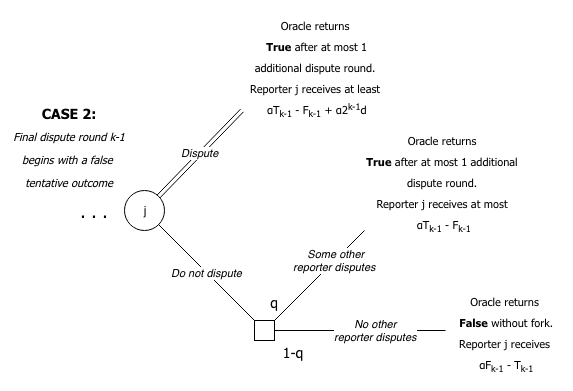}
\caption{A decision tree modeling the decision faced by an arbitrary reporter during the $(k-1)$th dispute round in the case where the tentative outcome is false and the induction hypothesis is assumed. The choice with the highest expected payout is indicated by doubled lines.}
\label{fig:reporter_decision_k-1_dispute_round_case_2}
\end{figure*}

\underline{Case 2:} \textit{The tentative outcome of the $m$\textup{th} dispute round is a false outcome.}

The strategic decision facing an arbitrary reporter, $j$, during the $m$th dispute round is shown in Figure \ref{fig:reporter_decision_final_dispute_round_case_2}, using the same notation as in case 1. In this case, if reporter $j$ decides to dispute, she can expect a payout of $aT_m -F_m + a2^m d$. Her expected payout if she doesn't dispute is $q(aT_m - F_m)+(1-q)(aF_m - T_m)$ where $0 \le q \le 1$.

Recall that, by assumption, there exists at least one token holder for whom $F_m = 0$. For such a token holder, the expected payout for disputing is $aT_m + a2^m d$, while the expected payout for not disputing is $qaT_m - (1-q)(T_m)$.  Thus, for this disputer, choosing to dispute is always a dominant strategy, no matter what other reporters do, because for all $q$ where $0 \le q \le 1$, $aT_m + a2^m d$ is greater than $qaT_m - (1-q)(T_m)$. So, we can always expect the false tentative outcome to be disputed.\footnote{Despite first appearances, we are not resting our fate on the hopes that just a \textit{single} reporter will be motivated to dispute a false tentative outcome. The situation is not so dire. Observe that, since the existence of this reporter is common knowledge, all reporters expect that $q=1$.  Hence, in this case, disputing the false tentative outcome is also the dominant choice -- by quite a large margin -- for \textit{all} token holders, even if they hold the maximum possible amount of false dispute stake.}

This concludes our base case. We have shown that in a final dispute round, a \texttt{True} tentative outcome is expected to go undisputed, while a false tentative outcome is expected to be disputed in favor of the \texttt{True} outcome.

\textbf{Induction Step:} 
Now suppose, as our induction hypothesis, that for all $k$ where $2 \le k \le m$, we can expect that in the $k$th dispute round a \texttt{True} tentative outcome will go undisputed, while a false tentative outcome will be be disputed in favor of the \texttt{True} outcome. We need to show that we can expect the same behavior in the $(k-1)$th dispute round. As before we have two cases: either the tentative outcome in the $(k-1)$th round is the \texttt{True} outcome, or it is a false outcome.

\underline{Case 1:} \textit{The tentative outcome of the $(k-1)$th dispute round is the \texttt{True} outcome.}
The strategic decision facing an arbitrary reporter, $j$, during the $(k-1)$th dispute round in the case where the tentative outcome is \texttt{True} is shown in Figure \ref{fig:reporter_decision_k-1_dispute_round_case_1}. Choosing \textit{not} to dispute the \texttt{True} tentative outcome results in the reporter receiving a payout of at least $aT_{k-1} - F_{k-1}$. Choosing to dispute the \textit{True} tentative outcome gives the reporter a payout of at most $aT_{k-1} - F_{k-1} - 2^{k-1} d + a2^{k} d$. When $0 < a < \frac{1}{2}$, it is the case that $- 2^{k-1} d + a2^{k} d < 0$, so choosing \textit{not} to dispute the \texttt{True} outcome is the strictly dominant decision, no matter what other reporters choose to do.

\underline{Case 2:} \textit{The tentative outcome of $(k-1)$th dispute round is a false outcome.}
The strategic decision facing an arbitrary reporter, $j$, during the $(k-1)$th dispute round in the case where the tentative outcome is false is shown in Figure \ref{fig:reporter_decision_k-1_dispute_round_case_2}. Recall that, by assumption, there exists at least one token holder for whom $F_m = 0$. For such a token holder, the expected payout for disputing the false tentative outcome in favor of the \texttt{True} outcome is at least $aT_{k-1} + a2^{k-1} d$, while the expected payout for not disputing is at most $aT_{k-1}$. Thus, for this disputer, choosing to dispute the false tentative outcome in favor of the \texttt{True} outcome is always a strictly dominant strategy, no matter what other reporters do. Hence we can expect that the false tentative outcome will be disputed in favor of the \texttt{True} outcome.\footnote{As in the base case, we are not resting our fate on the hopes that just a \textit{single} reporter will be motivated to dispute a false tentative outcome. Since the existence of this reporter is common knowledge, all reporters expect that $q=1$.}

This completes the induction step and shows that in every dispute round of the dispute sequence we can expect the tentative outcome to be disputed if and only if it is a false outcome. It follows immediately that we can expect the querier to submit their query with the \texttt{True} outcome as the initial tentative outcome.
\end{proof}

\end{document}